%%%%%%%%%%%%%%%%%%%%%%%%%%%%%%%%%%%%%%%%%%%%%%%%%%%%%%%%%%%%%%%%%%%%%%%%%%%%%%%%
%2345678901234567890123456789012345678901234567890123456789012345678901234567890
%        1         2         3         4         5         6         7         8

\documentclass[letterpaper, 10 pt, conference]{ieeeconf}

\usepackage{preamble}
\usepackage{macros}

\title{\bf Bounds of Validity for Bifurcations of Equilibria in a Class of Networked Dynamical Systems}

\author{Pranav Gupta$^1$, Ravi Banavar$^2$, and Anastasia Bizyaeva$^3$%
    \thanks{$^1$ P.G. is with the Centre for Systems and Control, Indian Institute of Technology Bombay, Mumbai - 400076, Maharashtra, India  and with the Sibley School of Mechanical and Aerospace Engineering, Cornell University, Ithaca, NY, USA, 14850 {\tt\small guptapranav@iitb.ac.in}}%
    \thanks{$^2$ R.B. is with the Centre for Systems and Control, Indian Institute of Technology Bombay, Mumbai - 400076, Maharashtra, India {\tt\small banavar@iitb.ac.in}}%
    \thanks{$^3$ A.B. is with the Sibley School of Mechanical and Aerospace Engineering, Cornell University, Ithaca, NY, USA, 14850, {\tt\small anastasiab@cornell.edu}}%
}

\begin{document}

\maketitle
\thispagestyle{empty}
\pagestyle{empty}

%%%%%%%%%%%%%%%%%%%%%%%%%%%%%%%%%%%%%%%%%%%%%%%%%%%%%%%%%%%%%%%%%%%%%%%%%%%%%%%%

\begin{abstract}
    Local bifurcation analysis plays a central role in understanding qualitative transitions in networked nonlinear dynamical systems, including dynamic neural network and opinion dynamics models. In this article we establish explicit bounds of validity for the classification of bifurcation diagrams in two classes of continuous-time networked dynamical systems, analogous in structure to the Hopfield and the Firing Rate dynamic neural network models. Our approach leverages recent advances in computing the bounds for the validity of \LSR{}, a reduction method widely employed in nonlinear systems analysis. Using these bounds we rigorously characterize neighbourhoods around bifurcation points where predictions from reduced-order bifurcation equations remain reliable. We further demonstrate how these bounds can be applied to an illustrative family of nonlinear opinion dynamics on \(k-\)regular graphs, which emerges as a special case of the general framework. These results provide new analytical tools for quantifying the robustness of bifurcation phenomena in dynamics over networked systems and highlight the interplay between network structure and nonlinear dynamical behaviour.
\end{abstract}

%%%%%%%%%%%%%%%%%%%%%%%%%%%%%%%%%%%%%%%%%%%%%%%%%%%%%%%%%%%%%%%%%%%%%%%%%%%%%%%%

\section{Introduction}

Local bifurcations are points of critical transitions in behaviour of nonlinear dynamical systems that occur when a parameter crosses a threshold, causing the stability of an equilibrium point to change and new features such as new equilibria to emerge. Bifurcations are a common route to multistability in complex systems, and bifurcation analysis is used to characterize the emergence, stability, and basins of attraction of coexisting equilibria as parameters vary.

In this paper we study the region of validity for local bifurcation analysis in two broad classes of nonlinear networked dynamical systems, analogous in their structure to Hopfield and firing rate dynamic neural network models. Systems of this form are commonly encountered as models of dynamic neural networks \cite{hopfield1984neurons,betteti2025input,wilson1972excitatory,betteti2025firing}, opinion dynamics and decision-making \cite{franci2015realization,bizyaeva2023nonlinear,gray2018multiagent,fontan2017multiequilibria,abara2017spectral,leonard2024fast}, Lur'e feedback systems \cite{tang2017finite,liu2018bipartite,miranda2018analysis}, gene regulatory networks \cite{goodwin1965oscillatory,rosenfeld2002negative,gardner2000construction}, and other complex systems across application domains. Multistable equilibria are common desirable features in these models, representing meaningful states such as stored associative memories in a neural network or collective decision states in a group of communicating agents.
\LSR{} is a common nonlinear analysis step in the local bifurcation analysis of nonlinear networks to classify such multistabilities \cite{golubitsky2023dynamics,leonard2024fast}. A central challenge is that the method is inherently local, yet the size of the neighbourhood in which the reduced bifurcation equations faithfully capture the local bifurcation diagram of the full system is typically unknown. In this work we will bound these regions of validity for the two broad classes of networked dynamical systems.

In our prior work \cite{gupta2024estimates}, we developed estimates for the domain of validity of reduced-order bifurcation equations for local bifurcations of equilibria derived using \LSR{}. These arise as finite-dimensional equilibrium equations defined on the kernel of the linearized dynamics at the singular point. Our results built directly upon the those established in \cite{jindal2024estimates}, where explicit bounds of validity were obtained for the application of \ImFT{} using tools from functional analysis. In the present work, we build on the general results by specializing the bounds in the context of equilibrium bifurcations in networked dynamical systems. Specifically, we apply this methodology to Hopfield and Firing Rate network models, relating the regions of validity of \LSR{}-based analysis to network structure and choice of smooth nonlinear activation functions in these models.

Our contributions are organized as follows. After introducing notation, relevant lemmas, and key background results on explicit bounds of validity for the \LSR{} in \Cref{sec:background}, we present in \Cref{sec:simple_bounds} a simplified general bound on the regions of validity for \LSR{} in an arbitrary nonlinear dynamical system, building on our previous work in \cite{gupta2024estimates}.
Next, in \Cref{sec:results} we specialize this bound for the Hopfield and Firing Rate network models, presenting simplified computable expressions that leverage the structure of the model equations.
Subsequently, in \Cref{sec:example} we apply the derived bounds to consensus bifurcations in nonlinear opinion dynamics networks over regular graphs. We show that the resulting bounds involve interpretable quantities related to the structure of the underlying graph.
Finally, in \Cref{sec:conclusion} we provide a brief discussion.

%%%%%%%%%%%%%%%%%%%%%%%%%%%%%%%%%%%%%%%%%%%%%%%%%%%%%%%%%%%%%%%%%%%%%%%%%%%%%%%%

\section{Background}
\label{sec:background}

\subsection{Notations and Preliminaries}
\label{subsec:prelims}

Let \(\R\) denote the set of real numbers and \(\N\) the set of positive integers.
A mapping \(f : U \to V\), where \(U \subseteq \R[m]\) and  \(V \subseteq \R[n]\), is said to be of class \(C^\nu\) for \(\nu \in \N\) if it is at least \(\nu\)-times continuously differentiable. For \(f \in C^{\nu}\), we write \(D_{y} f(y)\) for the Jacobian matrix of partial derivatives with respect to the variables \(y\). For \(y_{0} \in \R[n]\) and \(r > 0\), the open ball of radius \(r\) centered at \(y_{0}\) is denoted by \(\mathcal{B}(y_0,r) := \{y \in \R[n] : \norm{y - y_0} < r\}\), where \(\norm{\cdot}\) is a vector norm on \(\R[n]\). For \(A \in \R[p \times q]\), we use \(\norm{A}\) to denote the spectral matrix norm induced by the same vector norm.
For a matrix \(M\), we denote by \(\sigma_{i}(M) = \lambda_i(M^{\top} M)\) its \(i-\)th largest singular value and by \(\sigma_{\min}(M)\) and \(\sigma_{\max}(M)\) its smallest and largest singular values respectively. We let \(\rho(M)\) denote the spectral radius of \(M\).
We write \(\mathbf{0}_{n}, \mathbf{1}_{n} \in \R[n]\) for the all-zero and all-one column vectors, respectively, and \(\mathbf{0}_{p \times q}, \mathbf{1}_{p \times q} \in \R[p \times q]\) for the corresponding matrices.

Let \(\mathcal{G} = (\mathcal{V}, \mathcal{E})\) denote a graph with vertex set \(\mathcal{V}\) of cardinality \(|\mathcal{V}| = n\) and edge set \(\mathcal{E} \subseteq \mathcal{V} \times \mathcal{V}\).
We associate with \(\mathcal{G}\) the signed adjacency matrix \(A \in \R[n \times n]\), where \(a_{ij}\) is the signed weight of the edge between nodes \(i\) and \(j\), and \(a_{ij} = 0\) if no such edge exists.
A graph is said to be \emph{\(k\)-regular} if every vertex has exactly \(k\) neighbours, i.e., the degree of each vertex is equal to \(k\). Equivalently, the adjacency matrix \(A\) of a \(k\)-regular graph satisfies \(A \mathbf{1}_{n} = k \mathbf{1}_{n}\).

\vspace{2mm}

The following lemma provides a convenient matrix norm identity that will be used repeatedly in our analysis.
\begin{lemma}\label{lem:P-norm}
    Let \(U \in \R[n \times k]\) be a matrix with orthonormal columns and let \(P = U U^{\top}\) be the orthogonal projection matrix onto \(\mathrm{col}(U)\). Then for any \(Y \in \R[n \times m]\), we have \(\norm{U^{\top} Y} = \norm{P Y}\) and \(\norm{Y U} = \norm{Y P}\).
\end{lemma}
\begin{proof}
    Since \(U^{\top}U = I\), and \(P^2=P\), from the definition of spectral radius we have
    \(
    \norm{U^{\top}Y}^2 = \rho(Y^{\top}UU^{\top}Y)
    = \rho(Y^{\top}PY)
    = \rho(Y^{\top}P^2Y)
    = \norm{PY}^2.
    \)
    Using \(P^{\top}=P\), we also have
    \(
    \norm{YU} = \norm{U^{\top}Y^{\top}}
    = \norm{PY^{\top}}
    = \norm{YP}.
    \)
\end{proof}

%%%%%%%%%%%%%%%%%%%%%%%%%%%%%%%%%%%%%%%%%%%%%%%%%%%%%%%%%%%%%%%%%%%%%%%%%%%%%%%%

\subsection{\LSR{}}
\label{subsec:LSR}

\LSR{} is a method used to analyse parametrized nonlinear dynamical systems near their singular equilibria, leveraging the \ImFT{}. The method decomposes the state space of a dynamical system into the null space of its Jacobian and its orthogonal complement at the singular point. The \ImFT{} is then used to eliminate the component orthogonal to the null space in the local bifurcation diagram of the system, effectively projecting the high-dimensional equations describing the equilibria of the model onto the kernel \cite[Ch. VII]{Golubitsky1985}. The resulting reduced equations capture the structure of the local bifurcation diagram of the original system. We now present a partial summary of the reduction procedure.

Let \(X \subset \R[n]\) and \(\Lambda \subset \R[m]\) be open sets. Consider the parametrized nonlinear dynamical system \(\dot{x} = \Phi(x,p)\) where \(x \in X\) is the state vector, \(p \in \Lambda\) is the parameter vector, and \(\Phi : X \times \Lambda \to \R[n]\) is at least \(\mathcal{C}^{2}\), with a bifurcation at some singular point \((x^{\star},p^{\star}) \in X \times \Lambda\), i.e. \(\Phi(x^{\star},p^{\star}) = 0\) and \(J^{\star} := D_{x}\Phi(x^{\star},p^{\star})\) is singular (non-invertible) with \(q := \dim\bigl(\kerJ\bigr) > 0\).

Let us denote the SVD of \(J^{\star}\) in the following manner:
\begin{equation}
    J^{\star} = \begin{bmatrix}W & \Wbar\end{bmatrix} \begin{bmatrix} \Sigma & \mathbf{0}_{q \times (n-q)} \\ \mathbf{0}_{(n-q) \times q} & \mathbf{0}_{q \times q}\end{bmatrix} \begin{bmatrix}\Vbar^{\top} \\ V^{\top}\end{bmatrix} = W\Sigma\Vbar^{\top} \label{eq:SVD}
\end{equation}
with \(\Sigma = \diag(\sigma_{1},\dots,\sigma_{n-q})\) and \(\sigma_{1} \geq \dots \geq \sigma_{n-q} > 0\). Naturally \(V, \Wbar \in \R[n \times q]\) correspond to the null singular values and \(\Vbar, W \in \R[n \times (n-q)]\) correspond to the non-zero singular values. By the properties of the SVD, the columns of \(V\) and \(W\) form orthonormal bases for the null and range spaces of \(J^{\star}\), while those of \(\Vbar\) and \(\Wbar\) form orthonormal bases their respective orthogonal complement subspaces.
% \begin{align}
%     \kerJ           & = \im(V),                       & \quad
%     \range(J^{\star})         & = \im(W), \notag                        \\
%     \kerJ^{\perp}   & = \im(\Vbar),                   & \quad
%     \range(J^{\star})^{\perp} & = \im(\Wbar). \label{def:bases}
% \end{align}
Consequently, the corresponding orthogonal projections are
\begin{align}
    P_{\kerJ}                     & = VV^{\top},                                  & \quad
    P_{\range(J^{\star})}         & = WW^{\top}, \notag                                   \\
    P_{\kerJ^{\perp}}             & = \Vbar\,\Vbar^{\top},                        & \quad
    P_{\range(J^{\star})^{\perp}} & = \Wbar\,\Wbar^{\top}. \label{eq:projections}
\end{align}

Using these, the unique orthogonal decomposition of the state \(x\in\R[n]\) into \(\kerJ \oplus \kerJ^{\perp}\) can be denoted by
\begin{equation*}
    X \ni x \mapsto (x^{\parallel},x^{\perp}) := (VV^{\top} x, \Vbar\,\Vbar^{\top} x) \in \kerJ \times \kerJ^{\perp}.
\end{equation*}
Define coordinate map \(\R[q] \times \R[n-q] \ni (\alpha,\beta) \mapsto \Gamma(\alpha,\beta) \in X\):
\begin{equation}
    x = \Gamma(\alpha,\beta) := \underbrace{V\alpha}_{x^\parallel} + \underbrace{\Vbar\beta}_{x^\perp} \quad\text{where}\quad \begin{cases} \alpha = V^{\top}x \in \R[q] \\ \beta = \Vbar^{\top}x \in \R[n-q] \end{cases} \label{def:split}
\end{equation}
where \(\alpha\) and \(\beta\) are the coordinates of \(x^{\parallel}\) and \(x^{\perp}\) in the bases \(V\) and \(V^{\top}\), respectively. We employ this coordinate map as a shorthand throughout the analysis.

The \LSR{} involves applying the \ImFT{} to obtain a locally valid implicit mapping \((\alpha,p) \mapsto \beta\). In particular, it guarantees the existence of open neighbourhoods \(\mathcal{N}\big((\alpha_0,p^{\star})\big) \subset \R[q] \times \Lambda\) and \(\mathcal{N}(\beta_0) \subset \R[n-q]\) equipped with implicit map \(\varphi:\mathcal{N}\big((\alpha_0,p^{\star})\big) \to \mathcal{N}(\beta_0) \)
that expresses coordinates of the state \(x\) along the \(\kerJ^{\perp}\) as a function of its coordinates along \(\kerJ\) and the bifurcation parameter, i.e. \(\beta = \varphi(\alpha,p)\), satisfying
\begin{align}
    \Phi(V\alpha+\Vbar\varphi(\alpha,p),p) & = 0\quad\forall\;(\alpha,p)\in\mathcal{N}\big((\alpha^{\star},p^{\star})\big). \label{eq:LSRmap}
\end{align}

%%%%%%%%%%%%%%%%%%%%%%%%%%%%%%%%%%%%%%%%%%%%%%%%%%%%%%%%%%%%%%%%%%%%%%%%%%%%%%%%

\subsection{Bounds of Validity for \LSR{} \label{subsec:LSR-bounds}}

A key limitation of \LSR{} is that the equivalence between the reduced bifurcation equations and the equilibria of the original system holds only locally, restricting the validity of analysis based on this reduction to a neighbourhood of the singular point. Quantifying this neighbourhood is crucial for evaluating the robustness of conclusions drawn from the reduced equations. To address this limitation, in previous work we derived a general bound of validity for \LSR{} for bifurcations of equilibria in systems of ordinary differential equations \cite{gupta2024estimates}. In this section we summarize our recent result.

In the context of \LSR{}, we derive explicit lower bounds on the neighbourhoods of the implicit map \(\varphi\) as open balls, i.e. \(\mathcal{B}\big((\alpha_0,p^{\star}),\rpr\big) \subset \mathcal{N}\big((\alpha_0,p^{\star})\big)\) and \(\mathcal{B}(\beta_0,\rpp) \subset \mathcal{N}(\beta_0)\). The radii \(\rpr\) and \(\rpp\) quantify the neighbourhood of the singular point \((\alpha,\beta,p) = (\alpha_0,\beta_0,p^{\star})\) within which the reduced and full bifurcation diagrams coincide topologically: \(\rpr\) specifies the radius along the critical subspace \(\kerJ \times \Lambda\), while \(\rpp\) specifies the corresponding radius along \(\kerJ^{\perp}\). The algorithm to explicitly determine the bounding radii \(\rpr\) and \(\rpp\) is described below.

Before we state the theorem on bounds, we define the following quantities.
First, we define two scalar quantities that will appear in the statement of the bounds,
\begin{subequations}
    \begin{align}
        \Mpr & := \norm{\begin{bmatrix}\mathbf{0}_{(n-q) \times q} & W^{\top} D_{p}\Phi(x^{\star},p^{\star})\end{bmatrix}}, \label{def:M_para} \\
        \Mpp & := \norm{\left(W^{\top} J^{\star} \Vbar\right)^{-1}}. \label{def:M_perp}
    \end{align} \label{eq:Ms}%
\end{subequations}
Next, we define the maps \(\xi_{\parallel}: \R[q] \times \Lambda \to \R[(n-q)\times(q+m)]\) and \(\xi_{\perp}: \R[q] \times \R[n-q] \times \Lambda \to \R[(n-q)\times(n-q)]\) as
\begin{subequations}
    \begin{align}
        \xipr{p} & := W^{\top}\begin{bmatrix} \xi_{\parallel\alpha}(\alpha,p) & \xi_{\parallel p}(\alpha,p) \end{bmatrix}, \label{def:xi_para} \\
        \xipp{p} & := W^{\top}\bigl(D_{x}\Phi\bigl(\Gamma(\alpha,\beta),p\bigr) - J^{\star}\bigr) W ,\label{def:xi_perp}
    \end{align} \label{eq:xis}%
\end{subequations}
where \(\xi_{\parallel p}(\alpha,p) = D_{p}\Phi\bigl(\Gamma(\alpha,\beta_0),p\bigr)-D_{p}\Phi(x^{\star},p^{\star})\) and \(\xi_{\parallel\alpha}(\alpha,p) = D_{x}\Phi\bigl(\Gamma(\alpha,\beta_0),p\bigr)V\).
The maps \eqref{eq:xis} quantify first-order variations of the Jacobian \(D_{x}\Phi\) and parameter derivative \(D_{p}\Phi\) along the critical subspace and its orthogonal complement.
We use \(\Bpr := \mathcal{B}\big((\alpha_0,p^{\star}),\rpr\big) \subset \R[q] \times \Lambda \) and \(\Bpp := \mathcal{B}\big(\beta_0,\rpp\big) \subset \R[n-q]\) as shorthand notation to denote open balls centred at the critical point.
Finally, we define two scalar quantities as the maximum norm of \(\xipr{p}, \xipp{p}\) over open balls centred around the critical point, and state the main result from \cite{gupta2024estimates}
\begin{subequations}
    \begin{align}
        \Lpr & := \sup_{(\alpha,p) \in \Bpr} \norm{\xipr{p}} \label{def:L_para} \\
        \Lpp & := \sup_{\substack{(\alpha,p) \in \Bpr                           \\ \beta \in \Bpp}} \norm{\xipp{p}}. \label{def:L_perp}
    \end{align}
\end{subequations}

\begin{theorem}\label{thm:LSR-bounds}  \cite[Theorem 3.2]{gupta2024estimates}
    For all \(\rpr,\rpp>0\) simultaneously satisfying
    \begin{subequations}
        \begin{align}
            \Lpr\rpr + \Lpp\rpp & < \frac{\rpp}{\Mpp} - \Mpr\rpr \label{eq:LSR_1} \\
            \Mpp\Lpp            & < 1 \label{eq:LSR_2}
        \end{align} \label{eq:LSR-bounds}%
    \end{subequations}
    there exists a smooth implicit map \(\varphi : \Bpr \to \Bpp\) that satisfies \eqref{eq:LSRmap}, where the relevant quantities in \eqref{eq:LSR-bounds} are defined by the equations \eqref{def:M_para}, \eqref{def:M_perp}, \eqref{def:L_para}, \eqref{def:L_perp}.
\end{theorem}

\section{Simplified General LS Bounds}
\label{sec:simple_bounds}

Before stating our main results in the next section, we present some results simplifying the statement of the general bounds of \Cref{thm:LSR-bounds}.
First we streamline the expressions \eqref{def:M_para}, \eqref{def:M_perp} in terms of the projections \(P_{\range(J^{\star})}=WW^{\top}\) and the partial derivatives of \(\Phi\) at \((x^{\star},p^{\star})\) w.r.t. \(x\) and \(p\).

\vspace{3mm}

\begin{lemma}\label{lem:M_para}
    \(\Mpr = \norm{P_{\range(J^{\star})} D_{p}\Phi(x^{\star},p^{\star})}.\)
\end{lemma}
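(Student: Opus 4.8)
The plan is to reduce the stated identity to a single application of \autoref{lem:PY-norm}, after first discarding the zero block appearing in the definition \eqref{def:M_para}. The two ingredients are elementary: (i) horizontally appending zero columns to a matrix leaves its spectral norm unchanged, and (ii) left multiplication by \(W^{\top}\) may be traded for left multiplication by the orthogonal projection \(WW^{\top}\) onto \(\range(J^{\star})\), which is the projection denoted \(P_{\perp}\) in the statement.

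First I would dispose of the zero block. Writing \(A := W^{\top} D_{p}\Phi(x^{\star},p^{\star}) \in \R[(n-q)\times m]\), the matrix inside \eqref{def:M_para} is the block row \(B := \begin{bmatrix}\mathbf{0}_{(n-q)\times q} & A\end{bmatrix}\). Since the zero block contributes nothing to \(BB^{\top}\), we have \(BB^{\top} = AA^{\top}\), and therefore \(\norm{B}^{2} = \rho(BB^{\top}) = \rho(AA^{\top}) = \norm{A}^{2}\), using the same \(\rho(\cdot)\)-based characterization of the spectral norm employed in \autoref{lem:PY-norm}. This yields \(\Mpr = \norm{A} = \norm{W^{\top} D_{p}\Phi(x^{\star},p^{\star})}\). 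Equivalently, one may simply note that padding with zero columns does not alter the singular values.

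Second, I would invoke \autoref{lem:PY-norm} with \(U = W\), which has orthonormal columns by the SVD \eqref{eq:SVD}, so that the associated projection is exactly \(P_{\perp} = WW^{\top} = P_{\range(J^{\star})}\). Taking \(Y = D_{p}\Phi(x^{\star},p^{\star})\) gives \(\norm{W^{\top} D_{p}\Phi(x^{\star},p^{\star})} = \norm{WW^{\top} D_{p}\Phi(x^{\star},p^{\star})} = \norm{P_{\perp} D_{p}\Phi(x^{\star},p^{\star})}\), which chains with the previous step to establish the claim.

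There is no substantive obstacle here, as the statement is essentially a bookkeeping simplification; the only points demanding care are confirming that \(W\) genuinely has orthonormal columns (so that \autoref{lem:PY-norm} applies) and correctly identifying \(P_{\perp}\) with the \emph{range} projection \(WW^{\top}\) rather than the complementary projection \(\bar{W}\bar{W}^{\top}\). This identification is forced by the fact that \eqref{def:M_para} contains the factor \(W^{\top}\) and not \(\bar{W}^{\top}\), since \(\norm{W^{\top} Y}\) and \(\norm{\bar{W}^{\top} Y}\) do not agree in general.
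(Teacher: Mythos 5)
Your proof is correct and follows essentially the same route as the paper: both eliminate the zero block via the Gram-matrix characterization of the spectral norm (you use \(BB^{\top}\), the paper uses \(B^{\top}B\), an immaterial difference) and then apply \autoref{lem:PY-norm} with \(U = W\) to convert \(\norm{W^{\top} D_{p}\Phi(x^{\star},p^{\star})}\) into the projected form. Your closing remark correctly resolves the notational point that \(P_{\perp}\) here denotes the range projection \(WW^{\top}\), consistent with how the lemma is later invoked in \autoref{thm:hopfield-bounds}.
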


\begin{proof}
    From \eqref{def:M_para} and definition of spectral radius,
    % \begin{align*}
    %     \Mpr^2 & = \rho\left(
    %     \begin{bmatrix} \mathbf{0}_{n-q} & W^{\top} D_{p}\Phi(x^{\star},p^{\star}) \end{bmatrix}^{\top}
    %     \begin{bmatrix} \mathbf{0}_{n-q} & W^{\top} D_{p}\Phi(x^{\star},p^{\star}) \end{bmatrix}
    %     \right) \\
    %            & = \rho\Bigl(\bigl(D_{p}\Phi(x^{\star},p^{\star})\bigr)^{\top} W W^{\top} D_{p}\Phi(x^{\star},p^{\star})\Bigr) \\
    %            & = \norm{W^{\top} D_{p}\Phi(x^{\star},p^{\star})}^2.
    % \end{align*}
    \(
    \Mpr^2 = \\ \rho\left(
    \begin{bmatrix} \mathbf{0}_{n-q} \!&\! W^{\top} J_{p}^{\star} \end{bmatrix}^{\top}
    \begin{bmatrix} \mathbf{0}_{n-q} \!&\! W^{\top} J_{p}^{\star} \end{bmatrix}
    \right) = \rho\bigl(J_{p}^{\star\top} W W^{\top} J_{p}^{\star}\bigr) \\
    = \norm{W^{\top} J_{p}^{\star}}^2,
    \) where \(J_{p}^{\star}:=D_{p}\Phi(x^{\star},p^{\star})\).
    Taking square root and applying \Cref{lem:P-norm} completes the proof.
\end{proof}

\vspace{4mm}

\begin{lemma}\label{lem:M_perp}
    \(\Mpp = 1/\sigma_{\min}\left(J^{\star}\right).\)
\end{lemma}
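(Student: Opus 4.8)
The plan is to compute the matrix inside the inverse in \eqref{def:M_perp} directly from the singular value decomposition \eqref{eq:SVD}, and then read off the spectral norm of its inverse. Substituting $J^{\star} = W\Sigma\bar{V}^{\top}$, I would exploit that the singular-vector blocks flanking $\Sigma$ have orthonormal columns, so that $W^{\top}W = I_{n-q}$ and $\bar{V}^{\top}\bar{V} = I_{n-q}$, while the remaining cross terms vanish because $[\bar{W}\,W]$ and $[V\,\bar{V}]$ are orthogonal. After these cancellations the inner matrix collapses to the diagonal block $\Sigma = \diag(\sigma_{q+1},\dots,\sigma_{n})$ of nonzero singular values. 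This single reduction is the crux of the argument.

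Once the inner matrix is identified as $\Sigma$, the rest is immediate. Its inverse $\Sigma^{-1} = \diag(\sigma_{q+1}^{-1},\dots,\sigma_{n}^{-1})$ is diagonal with positive entries, so its spectral norm equals its largest diagonal entry. Since the singular values are ordered as $0 < \sigma_{q+1} \le \cdots \le \sigma_{n}$, the reciprocals obey $\sigma_{q+1}^{-1} \ge \cdots \ge \sigma_{n}^{-1}$, giving $\norm{\Sigma^{-1}} = 1/\sigma_{q+1}$. Finally I would note that $\sigma_{q+1}$ is by construction the smallest nonzero singular value of $J^{\star}$, i.e. $\sigma_{\min}(J^{\star})$ in the notation of \autoref{subsec:prelims}, which yields $\Mpp = 1/\sigma_{\min}(J^{\star})$.

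The only place requiring care is the first step: one must pair the left singular block $W$ with the right singular block $\bar{V}$ so that every off-diagonal contribution drops out and the inner matrix reduces exactly to $\Sigma$, rather than to a product of the form $\Sigma\,\bar{V}^{\top}W$ whose inverse would in general have norm strictly larger than $1/\sigma_{\min}(J^{\star})$. This is also the natural block to appear here, since it is precisely the derivative with respect to $\beta$ of the range component $W^{\top}\Phi(V\alpha+\bar{V}\beta,p)$ to which the \ImFT{} is applied. Everything downstream of this identification is a one-line consequence of $\Sigma$ being diagonal.
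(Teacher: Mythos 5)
Your proof is correct and takes essentially the same route as the paper's: both identify the inner matrix as \(W^{\top}J^{\star}\bar{V}\), collapse it to \(\Sigma\) using orthonormality of the SVD blocks, and read off \(\norm{\Sigma^{-1}} = 1/\sigma_{q+1} = 1/\sigma_{\min}(J^{\star})\). Your closing remark also correctly resolves the inconsistency in \eqref{def:M_perp} (which literally shows \(W\) rather than \(\bar{V}\) as the right factor) in exactly the way the paper's own proof does, by pairing \(W^{\top}\) with \(\bar{V}\) so that the block is \(\Sigma\) and not \(\Sigma\bar{V}^{\top}W\).
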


\begin{proof}
    From its definition in \eqref{def:M_perp},
    % \begin{align*}
    %     \Mpp & = \norm{\bigl(W^{\top} D_x \Phi(x^{\star},p^{\star}) \Vbar\bigr)^{-1}}               \\
    %          & = \sigma_{\max}\left(\bigl(W^{\top} J^{\star} \Vbar\bigr)^{-1}\right)                \\
    %          & = \sigma_{\max}\left(\bigl(W^{\top} W\Sigma\Vbar^{\top} \Vbar\bigr)^{-1}\right)      \\
    %          & = \sigma_{\max}(\Sigma^{-1}) = 1/\sigma_{\min}(\Sigma) = 1/\sigma_{\min}(J^{\star}).
    % \end{align*}
    \(
    \Mpp = \norm{\bigl(W^{\top} J^{\star} \Vbar\bigr)^{-1}}
    = \norm{\bigl(W^{\top} W\Sigma\Vbar^{\top} \Vbar\bigr)^{-1}}
    = \norm{\Sigma^{-1}} = \sigma_{\max}(\Sigma^{-1}) = \frac{1}{\sigma_{\min}(\Sigma)} \\
    = 1/\sigma_{\min}(J^{\star}),
    \)
    since \(\Sigma\) is the diagonal matrix of singular values (all strictly positive) of \(J^{\star}\).
\end{proof}

Next, we prove that the second inequality condition in the originally derived \Cref{thm:LSR-bounds} is redundant, and state a simplified version of the bounds.
\begin{theorem} \label{thm:LSR-bounds-simple}
    For all \(\rpr,\rpp>0\) satisfying
    \begin{equation}
        \Lpr\rpr + \Lpp\rpp  < \frac{\rpp}{\Mpp} - \Mpr\rpr \label{eq:LSR-bounds-simple}
    \end{equation}
    there exists a smooth implicit map \(\varphi : \Bpr \to \Bpp\) that satisfies \eqref{eq:LSRmap}, where the quantities in \eqref{eq:LSR-bounds-simple} are defined by \Cref{lem:M_para}, \Cref{lem:M_perp}, and equations \eqref{def:L_para}, \eqref{def:L_perp}.
\end{theorem}
\begin{proof}
    Suppose \eqref{eq:LSR_1} holds. Rearranging terms in this inequality, we can equivalently state that
    \begin{equation*}
        \Mpp \Lpp + \Mpp \big(\Mpr + \Lpr\big) \frac{\rpr}{\rpp} < 1.
    \end{equation*}
    Since all the quantities on the above left hand side are non-negative scalars, it follows that \(\Mpp \Lpp < 1\) must necessarily hold, i.e. \eqref{eq:LSR_2} is satisfied. The theorem statement then follows directly from \Cref{thm:LSR-bounds}.
\end{proof}

%%%%%%%%%%%%%%%%%%%%%%%%%%%%%%%%%%%%%%%%%%%%%%%%%%%%%%%%%%%%%%%%%%%%%%%%%%%%%%%%

\section{Main Results: Network Bounds}
\label{sec:results}

In this section we study the robustness of bifurcation diagrams for two general classes of networked nonlinear dynamical systems.
First we consider systems of the form
\begin{equation}
    \dot{x} = \Phi(x,p) := -Cx + pAS(x) + b \label{eq:hopfield}
\end{equation}
where \(x \in \R[n]\) is the state vector, \(p \in \R[]\) is a parameter, \(b \in \R[n]\) is a constant input or bias, \(A \in \R[n \times n]\), and \(C=\diag(c_1,\dots,c_n)\) with \(c_i \in \R[>0]\). The function \(S(z) = \begin{bmatrix} S_1(z_1) & \dots & S_n(z_n) \end{bmatrix}^{\top}\) applies an element-wise smooth activation function to each input, and its derivative is denoted as \(D_{z}S(z) = \diag\bigl(S'_1(z_1), \dots, S'_n(z_n)\bigr)\). We will refer to \eqref{eq:hopfield} as a \emph{Hopfield} or \emph{voltage} model as it follows the structure of classic and modern associative memory networks \cite{hopfield1984neurons,betteti2025input}.
The second class we will consider are systems of the form
\begin{equation}
    \dot{x} = \Psi(x,p) := -Cx + S(pAx + b) \label{eq:firing}
\end{equation}
with all the quantities defined the same as for the Hopfield models. We will refer to \eqref{eq:firing} as a \emph{firing rate} model \cite{wilson1972excitatory,betteti2025firing}. It has been shown that \eqref{eq:hopfield}, \eqref{eq:firing} are related through changes of coordinates and time-varying input transformations \cite{miller2012mathematical}.

In this section we establish computable bounds of validity for the equivalence of bifurcation diagrams of \eqref{eq:hopfield}, \eqref{eq:firing} with their reduced bifurcation equations obtained via \LSR{} at singular points, specializing \Cref{thm:LSR-bounds-simple}. We seek to state these bounds in terms of interpretable quantities such as the network connectivity matrix \(A\), the activation function \(S\), and its derivatives.

%%%%%%%%%%%%%%%%%%%%%%%%%%%%%%%%%%%%%%%%%%%%%%%%%%%%%%%%%%%%%%%%%%%%%%%%%%%%%%%%

\subsection{Hopfield models}
\label{subsec:hopfield}

First we consider \eqref{eq:hopfield}. Computing derivatives of the model with respect to \(x\) and \(p\) yields
\begin{subequations}
    \begin{align}
        D_{x}\Phi(x,p) & = -C + pAD_{z}S(x), \label{eq:hopfield-D_x} \\
        D_{p}\Phi(x,p) & = AS(x). \label{eq:hopfield-D_p}
    \end{align} \label{eq:hopfield-jacobians}
\end{subequations}
We define the Jacobian \(J^{\star}_H\) at the singular point as
\begin{equation*}
    J^{\star}_H := D_{x}\Phi(x^{\star},p^{\star}) = -C + p^{\star} A D_{z}S(x^{\star})
\end{equation*}
and projection matrices \(P:=P_{\range(J^{\star}_H)}=W_H W_H^{\top}\) and \(Q:=P_{\ker(J^{\star}_H)^{\perp}}=\Vbar_H \Vbar_H^{\top}\) from \(J^{\star}_H\) as described in \eqref{eq:SVD} and \eqref{eq:projections}. Next, we make the following assumptions:
\begin{assumption}\label{asmp:hopfield}
    There exists \((x^{\star},p^{\star}) \in \R[n] \times \R[>0]\) such that the following conditions hold:
    \begin{enumerate}[label=(\roman*)]
        \item The pair \((x^{\star},p^{\star})\) is an equilibrium of \eqref{eq:hopfield}.
        \item The Jacobian \(J^{\star}_H\) has a zero eigenvalue with multiplicity \(q\), and \(\Re(\lambda_{j}) \neq 0\) for all other eigenvalues.
    \end{enumerate}
\end{assumption}
To apply \Cref{thm:LSR-bounds-simple} to \eqref{eq:hopfield}, we compute the various quantities in the inequality condition \eqref{eq:LSR-bounds-simple}.
In the following sequence of lemmas we derive expressions needed to express \(\Lpr,\Lpp\) in \eqref{eq:LSR-bounds-simple}, exploiting the form of Hopfield dynamics to express these quantities in terms of the network matrix \(A\), activation function \(S\), and its first derivatives.

\vspace{3mm}

\begin{lemma}\label{lem:hopfield-xi_para}
    \(\norm{\xipr{p}} = \norm{P A \begin{bmatrix}\Delta_{\parallel}(\alpha,p) V & \bar{S}_{\parallel}(\alpha)\end{bmatrix}},\)\\
    where \(\Delta_{\parallel}(\alpha,p)\) and \(\bar{S}_{\parallel}(\alpha)\) are defined as %
    \vspace{0mm} \begin{subequations}
        \begin{align} \label{eq:hopfield-xi_para_terms}
            \Delta_{\parallel}(\alpha,p) & := p D_{z}S\bigl(\Gamma(\alpha,\beta_0)\bigr) - p^{\star} D_{z}S(x^{\star}) \\
            \bar{S}_{\parallel}(\alpha)  & := S\bigl(\Gamma(\alpha,\beta_0)\bigr) - S(x^{\star})
        \end{align}
        \vspace{-3mm} \end{subequations}
\end{lemma}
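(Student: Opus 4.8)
The plan is to begin from the definition \eqref{def:xi_para} of $\xipr{p}$ and peel off the leading factor $W^{\top}$ using \autoref{lem:PY-norm}. Since the columns of $W$ are orthonormal with associated projection $P = WW^{\top}$, that lemma gives $\norm{W^{\top}Y} = \norm{PY}$ for any $Y$; applying it with $Y = \begin{bmatrix}\xi_{\parallel\alpha}(\alpha,p) & \xi_{\parallel p}(\alpha,p)\end{bmatrix}$ reduces the claim to showing that the two blocks satisfy $P\xi_{\parallel\alpha}(\alpha,p) = PA\Delta_{\parallel}(\alpha,p)V$ and $P\xi_{\parallel p}(\alpha,p) = PA\bar{S}_{\parallel}(\alpha)$. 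Because left multiplication by $PA$ distributes over the block concatenation, combining the two blocks then yields the stated expression $\norm{PA\begin{bmatrix}\Delta_{\parallel}(\alpha,p)V & \bar{S}_{\parallel}(\alpha)\end{bmatrix}}$.

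The parameter block is immediate: substituting the Hopfield derivative \eqref{eq:hopfield-D_p}, namely $D_{p}\Phi(x,p) = AS(x)$, into $\xi_{\parallel p}(\alpha,p) = D_{p}\Phi(\Gamma(\alpha,\beta_0),p) - D_{p}\Phi(x^{\star},p^{\star})$ gives $\xi_{\parallel p}(\alpha,p) = A\bigl(S(\Gamma(\alpha,\beta_0)) - S(x^{\star})\bigr) = A\bar{S}_{\parallel}(\alpha)$, so that $P\xi_{\parallel p}(\alpha,p) = PA\bar{S}_{\parallel}(\alpha)$.

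For the state block I would substitute the Hopfield Jacobian \eqref{eq:hopfield-D_x}, $D_{x}\Phi(x,p) = -C + pAD_{z}S(x)$, into $\xi_{\parallel\alpha}(\alpha,p) = D_{x}\Phi(\Gamma(\alpha,\beta_0),p)V$. The central manipulation is to add and subtract the singular Jacobian $J^{\star}_H = -C + p^{\star}AD_{z}S(x^{\star})$, rewriting the multiplier as $-C + pAD_{z}S(\Gamma(\alpha,\beta_0)) = J^{\star}_H + A\Delta_{\parallel}(\alpha,p)$, where the common $-C$ cancels and $pAD_{z}S(\Gamma(\alpha,\beta_0)) - p^{\star}AD_{z}S(x^{\star}) = A\Delta_{\parallel}(\alpha,p)$. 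The key observation, which is really the only nonroutine point, is that $J^{\star}_H V = 0$: by \eqref{def:bases} the columns of $V$ form an orthonormal basis of $\ker(J^{\star}_H)$, so the $J^{\star}_H$ term annihilates $V$ and leaves $\xi_{\parallel\alpha}(\alpha,p) = A\Delta_{\parallel}(\alpha,p)V$. Projecting gives $P\xi_{\parallel\alpha}(\alpha,p) = PA\Delta_{\parallel}(\alpha,p)V$, and assembling the two blocks as in the first paragraph completes the proof.
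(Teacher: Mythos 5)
Your proposal is correct and follows essentially the same route as the paper: compute the two blocks of $\xipr{p}$ from the Hopfield derivatives, use $J^{\star}_H V = 0$ to eliminate the $-C$ term (your add-and-subtract of $J^{\star}_H$ is algebraically the same step as the paper's substitution $CV = p^{\star}AD_{z}S(x^{\star})V$), and invoke \autoref{lem:PY-norm} to replace $W^{\top}$ by $P$. The only cosmetic difference is that you apply \autoref{lem:PY-norm} at the start rather than at the end, which changes nothing of substance.
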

\begin{proof}
    Following the definition \eqref{def:xi_para}, we must compute \(\xi_{\parallel\alpha}(\alpha,p)\) and \(\xi_{\parallel p}(\alpha,p)\).
    Since \(J^{\star}_H V = 0\), we can write \\
    \(
    \bigl(-C + p^{\star} A D_{z}S(x^{\star})\bigr) V = 0 \implies C V = p^{\star} A D_{z}S(x^{\star}) V,
    \)
    which allows us to simplify \(\xi_{\parallel\alpha}(\alpha,p)\) as
    \begin{align*}
        \xi_{\parallel\alpha}(\alpha,p) & = D_{x}\Phi\bigl(\Gamma(\alpha,\beta_0),p\bigr)V                                             \\
                                        & = \bigl[-C+p A D_{z}S\bigl(\Gamma(\alpha,\beta_0)\bigr)\bigr]V                               \\
                                        & = \bigl[-p^{\star} A D_{z}S(x^{\star}) + p A D_{z}S\bigl(\Gamma(\alpha,\beta_0)\bigr)\bigr]V \\
                                        & =: A \Delta_{\parallel}(\alpha,p) V
    \end{align*}
    % which allows us to simplify
    % \(
    %     \xi_{\parallel\alpha}(\alpha,p) = D_{x}\Phi\bigl(\Gamma(\alpha,\beta_0),p\bigr)V \\
    %                                     = \bigl[-C + p A D_{z}S\bigl(\Gamma(\alpha,\beta_0)\bigr)\bigr]V
    %                                     = \bigl[p A D_{z}S\bigl(\Gamma(\alpha,\beta_0) - p^{\star} A D_{z}S(x^{\star})\bigr)\bigr]V
    %                                     = A \Delta_{\parallel}(\alpha,p) V.
    % \)
    % Next, we simplify \(\xi_{\parallel p}(\alpha,p)\) as
    % \begin{align*}
    %     \xi_{\parallel p}(\alpha,p) & := D_{p}\Phi\bigl(\Gamma(\alpha,\beta_0),p\bigr) - D_{p}\Phi(x^{\star},p^{\star})                   \\
    %                                 & = A \Bigl(S\bigl(\Gamma(\alpha,\beta_0)\bigr) - S(x^{\star})\Bigr) = A \bar{S}_{\parallel}(\alpha).
    % \end{align*}
    Next, we express
    \(
    \xi_{\parallel p}(\alpha,p) = D_{p}\Phi\bigl(\Gamma(\alpha,\beta_0),p\bigr) - \\ D_{p}\Phi(x^{\star},p^{\star}) = A \bigl(S\bigl(\Gamma(\alpha,\beta_0)\bigr) - S(x^{\star})\bigr) = A \bar{S}_{\parallel}(\alpha).
    \)

    Finally, combining our expressions for \(\xi_{\parallel\alpha}\) and \(\xi_{\parallel p}\) yields
    \(
    \xipr{p} = W^{\top} A \begin{bmatrix} \Delta_{\parallel}(\alpha,p) V & \bar{S}_{\parallel}(\alpha) \end{bmatrix}.
    \) Applying \Cref{lem:P-norm} to this yields the desired expression.
\end{proof}

\vspace{3mm}

\begin{lemma}\label{lem:hopfield-xi_perp}
    \(\norm{\xipp{p}} = \norm{P A \Delta_{\perp}(\alpha,\beta,p) Q}\) where
    \vspace{0mm} \begin{equation}
        \Delta_{\perp}(\alpha, \beta, p) := p D_{z}S\bigl(\Gamma(\alpha,\beta)\bigr) - p^{\star} D_{z}S(x^{\star}). \label{eq:hopfield-xi_perp_terms}
        \vspace{2mm} \end{equation}
\end{lemma}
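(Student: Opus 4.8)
The plan is to evaluate the definition \eqref{def:xi_perp} using the explicit Hopfield Jacobian \eqref{eq:hopfield-D_x}, and then to rewrite the resulting two-sided matrix product in projection form via \autoref{lem:PY-norm} and \autoref{lem:YP-norm}, exactly mirroring the structure of the proof of \autoref{lem:hopfield-xi_para}.

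First I would compute the central factor \(D_{x}\Phi\bigl(\Gamma(\alpha,\beta),p\bigr) - D_{x}\Phi(x^{\star},p^{\star})\) appearing in \eqref{def:xi_perp}. Substituting \eqref{eq:hopfield-D_x} at the two arguments gives \(-C + pAD_{z}S\bigl(\Gamma(\alpha,\beta)\bigr)\) and \(-C + p^{\star}AD_{z}S(x^{\star})\), respectively. The key observation is that the diagonal term \(-C\) appears identically in both Jacobians and, being independent of both the state and the parameter, cancels directly upon subtraction — a simpler variant of the cancellation arranged through the kernel relation \(J^{\star}_H V = 0\) in the proof of \autoref{lem:hopfield-xi_para}. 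The remainder factors cleanly as \(A\bigl(pD_{z}S(\Gamma(\alpha,\beta)) - p^{\star}D_{z}S(x^{\star})\bigr) = A\,\Delta_{\perp}(\alpha,\beta,p)\), with \(\Delta_{\perp}\) as defined in \eqref{eq:hopfield-xi_perp_terms}. This yields \(\xipp{p} = W^{\top} A\,\Delta_{\perp}(\alpha,\beta,p)\,\bar{V}\).

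It then remains to express the norm of this product through the projections \(P = WW^{\top}\) onto \(\range(J^{\star}_H)\) and \(Q = \bar{V}\bar{V}^{\top}\) onto \(\ker(J^{\star}_H)^{\perp}\). Applying \autoref{lem:PY-norm} with \(Y = A\,\Delta_{\perp}(\alpha,\beta,p)\,\bar{V}\) converts the left factor \(W^{\top}\) into \(P\), giving \(\norm{\xipp{p}} = \norm{P A\,\Delta_{\perp}(\alpha,\beta,p)\,\bar{V}}\); subsequently applying \autoref{lem:YP-norm} with \(Y = P A\,\Delta_{\perp}(\alpha,\beta,p)\) converts the right factor \(\bar{V}\) into \(Q\), producing the claimed identity \(\norm{\xipp{p}} = \norm{P A\,\Delta_{\perp}(\alpha,\beta,p)\,Q}\). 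Since the only structural input is the cancellation of the state- and parameter-independent matrix \(C\), I expect no genuine obstacle; the single point needing care is the bookkeeping of which orthonormal factor (\(W\) versus \(\bar{V}\)) sits on each side of the difference of Jacobians, so that the two norm lemmas return the correct projections \(P\) and \(Q\) on the left and right, respectively.
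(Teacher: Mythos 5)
Your proposal is correct and matches the paper's proof essentially step for step: expand \eqref{def:xi_perp} using \eqref{eq:hopfield-D_x}, cancel the common \(-C\) term, factor out \(A\) to get \(\xipp{p} = W^{\top} A \Delta_{\perp}(\alpha,\beta,p)\bar{V}\), and then apply \autoref{lem:PY-norm} followed by \autoref{lem:YP-norm} to obtain the projection form. Your side observation that the \(-C\) cancellation here is simpler than the kernel relation \(J^{\star}_H V = 0\) needed in \autoref{lem:hopfield-xi_para} is also accurate.
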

\begin{proof}
    Building from its definition \eqref{def:xi_perp}, we have
    % \(
    %     \xipp{p} = W^{\top}\Bigl(D_{x}\Phi\bigl(\Gamma(\alpha,\beta),p\bigr) - D_{x}\Phi(x^{\star},p^{\star})\Bigr)\Vbar = \\
    %     W^{\top} A \Bigl(p D_{z}S\bigl(\Gamma(\alpha,\beta)\bigr) - p^{\star} D_{z}S(x^{\star})\Bigr)\Vbar = W^{\top} A \Delta_{\perp} \Vbar.
    % \)
    \begin{align*}
        \xipp{p} & = W^{\top}\Bigl(D_{x}\Phi\bigl(\Gamma(\alpha,\beta),p\bigr)-D_{x}\Phi(x^{\star},p^{\star})\Bigr)\Vbar \\
                 & = W^{\top}\Bigl(p A D_{z}S\bigl(\Gamma(\alpha,\beta)\bigr) - p^{\star} A D_{z}S(x^{\star})\Bigr)\Vbar \\
                 & =: W^{\top} A \Delta_{\perp}(\alpha, \beta, p) \Vbar.
    \end{align*}
    The desired expression follows directly by applying \Cref{lem:P-norm} as follows:
    \(
    \norm{W^{\top} A \Delta_{\perp}(\alpha, \beta, p) \Vbar} = \\
    \norm{WW^{\top} A \Delta_{\perp}(\alpha, \beta, p) \Vbar\,\Vbar^{\top}} = \norm{P A \Delta_{\perp}(\alpha, \beta, p) Q}.
    \)
\end{proof}

\vspace{2mm}

With these lemmas in place, we now state the bounds from \Cref{thm:LSR-bounds-simple} for bifurcations in the Hopfield model.

\vspace{3mm}

\begin{theorem}\label{thm:hopfield-bounds}
    Consider the Hopfield dynamics \eqref{eq:hopfield} under \Cref{asmp:hopfield}. For all \(\rpr,\rpp>0\) satisfying
    \begin{multline}
        \Lpr\rpr + \Lpp\rpp \\ < \sigma_{\min}\left(J^{\star}_H\right) \rpp - \tfrac{1}{\abs{p^{\star}}} \norm{P(Cx^{\star} - b)} \rpr
    \end{multline}
    where the quantities \(\Lpr,\Lpp\) are defined as
    \begin{align*}
        \Lpr & = \underset{(\alpha,p) \in \Bpr}{\sup} \norm{P A \begin{bmatrix}\Delta_{\parallel}(\alpha,p)V & \bar{S}_{\parallel}(\alpha) \end{bmatrix}} \\
        \Lpp & = \underset{\substack{(\alpha,p) \in \Bpr                                                                                                  \\ \beta \in \Bpp}}{\sup} \norm{P A \Delta_{\perp}(\alpha,\beta,p) Q},
    \end{align*}
    with \(\Delta_{\parallel}\), \(\Delta_{\perp}\), \(\bar{S}_{\parallel}\) defined in \eqref{eq:hopfield-xi_para_terms}, \eqref{eq:hopfield-xi_perp_terms}, there exists a smooth implicit map \(\varphi : \Bpr \to \Bpp\) satisfying \eqref{eq:LSRmap}.
\end{theorem}
\begin{proof}
    From the equilibrium condition, we see that \(AS(x^{\star}) = \frac{1}{p^{\star}} (Cx^{\star} - b)\). The statement then immediately follows from applying \Cref{lem:M_para} to obtain
    \begin{equation}
        \Mpr = \norm{P A S(x^{\star})} = \tfrac{1}{\abs{p^{\star}}}\norm{P(Cx^{\star} - b)}. \label{eq:M_para}
    \end{equation}
    Substituting \eqref{def:M_perp}, \eqref{def:L_para}, \eqref{def:L_perp} from Lemmas \ref{lem:M_perp}, \ref{lem:hopfield-xi_para}, \ref{lem:hopfield-xi_perp} into \eqref{eq:LSR-bounds-simple} from \Cref{thm:LSR-bounds-simple} yields the theorem statement.
\end{proof}

%%%%%%%%%%%%%%%%%%%%%%%%%%%%%%%%%%%%%%%%%%%%%%%%%%%%%%%%%%%%%%%%%%%%%%%%%%%%%%%%

\subsection{Firing Rate models}
\label{subsec:firing}

Next, we repeat the presented analysis for \eqref{eq:firing}.
The derivatives of \eqref{eq:firing} with respect to \(x\) and \(p\) are
\begin{subequations}
    \begin{align}
        D_{x}\Psi(x,p) & = -C + pD_{z}S(pAx + b)A \label{eq:firing-D_x} \\
        D_{p}\Psi(x,p) & = D_{z}S(pAx + b)Ax \label{eq:firing-D_p}
    \end{align} \label{eq:firing-jacobians}
\end{subequations}
and the Jacobian \(J^{\star}_F\) at the singular point evaluates to
\begin{equation*}
    J^{\star}_F := D_{x}\Psi(x^{\star},p^{\star}) = -C + p^{\star} D_{z}S(p^{\star} A x^{\star} + b)A.
\end{equation*}
Define projection matrices \(P:=P_{\range(J^{\star}_F)}=W_F W_F^{\top}\) and \(Q:=P_{\ker(J^{\star}_F)^{\perp}}=\Vbar_F \Vbar_F^{\top}\) as described in \eqref{eq:SVD} and \eqref{eq:projections}.  We make the following assumption:
\begin{assumption}\label{asmp:firing}
    There exists \((x^{\star},p^{\star}) \in \R[n] \times \R[>0]\) such that the following conditions hold:
    \begin{enumerate}[label=(\roman*)]
        \item The pair \((x^{\star},p^{\star})\) is an equilibrium of \eqref{eq:firing}.
        \item The Jacobian \(J^{\star}_F\) has a zero eigenvalue with multiplicity \(q\), and \(\Re(\lambda_{j}) \neq 0\) for all other eigenvalues.
    \end{enumerate}
\end{assumption}
Analogously to the computation carried out for Hopfield networks, in the following sequence of Lemmas we derive expressions needed to define \(\Lpr,\Lpp\) in order to apply \Cref{thm:LSR-bounds-simple} to \eqref{eq:firing}.

\vspace{3mm}

\begin{lemma}\label{lem:firing-xi_para}
    \(\norm{\xipr{p}} = \norm{P \begin{bmatrix} \Delta^{\parallel}_{x}(\alpha,p) AV & \Delta^{\parallel}_{p}(\alpha,p) \end{bmatrix}}\) where \(\Delta^{\parallel}_{x}(\alpha,p)\) and \(\Delta^{\parallel}_{p}(\alpha,p)\) are defined as
    \vspace{-1mm}\begin{subequations} \label{eq:firing-delta_para}
        \begin{align}
             & \Delta^{\parallel}_{x}(\alpha,p) := p D_{z}S\bigl(p A \Gamma(\alpha,\beta_0) + b\bigr) - p^{\star}D_{z}S(p^{\star} A x^{\star} + b) \\
             & \Delta^{\parallel}_{p}(\alpha,p) := D_{z}S\bigl(p A \Gamma(\alpha,\beta_0) + b\bigr)A\Gamma(\alpha,\beta_0) \notag                  \\
             & \hspace{22mm} - D_{z}S(p^{\star} A x^{\star} + b) A x^{\star}
        \end{align}
        \vspace{-5mm}\end{subequations}
\end{lemma}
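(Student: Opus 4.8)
The plan is to follow the proof of \autoref{lem:hopfield-xi_para} almost verbatim, now specializing the general definition \eqref{def:xi_para} to \(\Psi\) rather than \(\Phi\) and using the firing-rate derivatives \eqref{eq:firing-jacobians}. Concretely, I must evaluate the two blocks \(\xi_{\parallel\alpha}(\alpha,p) = D_{x}\Psi\bigl(\Gamma(\alpha,\beta_0),p\bigr)V\) and \(\xi_{\parallel p}(\alpha,p) = D_{p}\Psi\bigl(\Gamma(\alpha,\beta_0),p\bigr) - D_{p}\Psi(x^{\star},p^{\star})\), assemble them into the block matrix \(\xipr{p} = W^{\top}\begin{bmatrix}\xi_{\parallel\alpha} & \xi_{\parallel p}\end{bmatrix}\), and finally convert \(W^{\top}\) into the projection \(P\) under the spectral norm.

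First I would substitute \eqref{eq:firing-D_x} into \(\xi_{\parallel\alpha}\) to obtain \(\bigl[-C + p\,D_{z}S\bigl(pA\Gamma(\alpha,\beta_0)+b\bigr)A\bigr]V\). The one step that does real work is the kernel identity: since the columns of \(V\) span \(\ker(J^{\star}_F)\), we have \(J^{\star}_F V = 0\), i.e. \(CV = p^{\star} D_{z}S(p^{\star}Ax^{\star}+b)AV\). Replacing \(-CV\) by this expression makes the diagonal damping term cancel and collapses the two activation terms into the difference \(\Delta^{\parallel}_{x}(\alpha,p)AV\), exactly as the analogous substitution \(CV = p^{\star} A D_{z}S(x^{\star})V\) did in the Hopfield case. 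Care is needed here only to keep the correct operator order \(D_{z}S(\cdot)A\) (diagonal times \(A\)) rather than \(AD_{z}S(\cdot)\), which is the chief bookkeeping difference from the voltage model.

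Next I would compute \(\xi_{\parallel p}\) directly from \eqref{eq:firing-D_p}: evaluating \(D_{p}\Psi\) at \(\bigl(\Gamma(\alpha,\beta_0),p\bigr)\) and at \((x^{\star},p^{\star})\) and subtracting gives precisely \(\Delta^{\parallel}_{p}(\alpha,p)\), with no further simplification required. Stacking the two blocks yields \(\xipr{p} = W^{\top}\begin{bmatrix}\Delta^{\parallel}_{x}(\alpha,p)AV & \Delta^{\parallel}_{p}(\alpha,p)\end{bmatrix}\), and applying \autoref{lem:PY-norm} with \(U=W\) (orthonormal columns by the SVD \eqref{eq:SVD}, so \(P=WW^{\top}\)) replaces \(W^{\top}\) by \(P\) inside the norm, giving the claimed identity.

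I do not anticipate a genuine obstacle, since the statement is the firing-rate mirror of \autoref{lem:hopfield-xi_para}; the only points requiring attention are tracking the inner composition \(pAx+b\) inside \(D_{z}S\) when differentiating (the \(D_p\) term in particular picks up the factor \(A\Gamma(\alpha,\beta_0)\) via the chain rule) and invoking \(J^{\star}_F V = 0\) correctly so that the \(-C\) term is eliminated and the clean difference structure of \(\Delta^{\parallel}_{x}\) emerges. Everything else is routine.
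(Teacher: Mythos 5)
Your proposal is correct and follows essentially the same route as the paper's own proof: compute \(\xi_{\parallel\alpha}\) using the kernel identity \(J^{\star}_F V = 0\) (i.e. \(CV = p^{\star}D_{z}S(p^{\star}Ax^{\star}+b)AV\)) to cancel the \(-C\) term, compute \(\xi_{\parallel p}\) by direct subtraction of \eqref{eq:firing-D_p} evaluated at the two points, stack the blocks, and convert \(W^{\top}\) to \(P\) via \autoref{lem:PY-norm}. Your attention to the operator order \(D_{z}S(\cdot)A\) versus \(AD_{z}S(\cdot)\) is exactly the right bookkeeping point, and your statement of the matrix to which \autoref{lem:PY-norm} is applied is in fact cleaner than the paper's (which carries an extraneous leading factor of \(A\) copied from the Hopfield case).
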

\begin{proof}
    Following the definition \eqref{def:xi_para}, we must compute \(\xi_{\parallel\alpha}(\alpha,p)\) and \(\xi_{\parallel p}(\alpha,p)\).
    Since \(J^{\star}_F V = 0\) we can write \(CV = p^{\star}D_{z}S(p^{\star} A x^{\star} + b)AV,\)
    simplifying \(\xi_{\parallel\alpha}(\alpha,p)\) as
    \begin{align*}
        \xi_{\parallel\alpha}(\alpha,p) & = D_{x}\Psi\bigl(\Gamma(\alpha,\beta_0),p\bigr)V                                                        \\
                                        & = \bigl[-C + p D_{z}S\bigl(p A \Gamma(\alpha,\beta_0) + b\bigr)A\bigr]V                                 \\
                                        & = \bigl[p D_{z}S\bigl(p A \Gamma(\alpha,\beta_0) + b\bigr) \notag                                       \\
                                        & \hspace{2mm} - p^{\star}D_{z}S(p^{\star} A x^{\star} + b)\bigr]AV =: \Delta^{\parallel}_{x}(\alpha,p)AV
    \end{align*}
    % which allows us to simplify
    % \(
    %     \xi_{\parallel\alpha}(\alpha,p) = D_{x}\Psi\bigl(\Gamma(\alpha,\beta_0),p\bigr)V
    %                                     = \bigl[p D_{z}S\bigl(p A \Gamma(\alpha,\beta_0) + b\bigr)A - C\bigr]V
    %                                     = \bigl[p D_{z}S\bigl(p A \Gamma(\alpha,\beta_0) + b\bigr) - p^{\star}D_{z}S(p^{\star} A x^{\star} + b)\bigr]AV =: \Delta^{\parallel}_{x}(\alpha,p)AV.
    % \)
    Next, we express
    \begin{align*}
        \xi_{\parallel p}(\alpha,p) & = D_{p}\Psi(\Gamma(\alpha,\beta_0),p)-D_{p}\Psi(x^{\star},p^{\star})                                    \\
                                    & = \bigl[D_{z}S\bigl(p A \Gamma(\alpha,\beta_0) + b\bigr)A\Gamma(\alpha,\beta_0) \notag                  \\
                                    & \hspace{10mm} - D_{z}S(p^{\star} A x^{\star} + b) A x^{\star}\bigr] =: \Delta^{\parallel}_{p}(\alpha,p)
    \end{align*}
    % \(
    %     \xi_{\parallel p}(\alpha,p) = D_{p}\Psi(\Gamma(\alpha,\beta_0),p)-D_{p}\Psi(x^{\star},p^{\star}) = \bigl[D_{z}S\bigl(p A \Gamma(\alpha,\beta_0) + b\bigr)A\Gamma(\alpha,\beta_0) - D_{z}S(p^{\star} A x^{\star} + b) A x^{\star}\bigr] =: \Delta^{\parallel}_{p}(\alpha,p). \\
    % \)
    Finally, combining our expressions for \(\xi_{\parallel\alpha}\) and \(\xi_{\parallel p}\) yields
    \(
    \xipr{p} = W^{\top} \begin{bmatrix} \Delta^{\parallel}_{x}(\alpha,p) AV & \Delta^{\parallel}_{p}(\alpha,p) \end{bmatrix}.
    \)
    Applying \Cref{lem:P-norm} to this yields the desired expression.
\end{proof}

\vspace{3mm}

\begin{lemma}\label{lem:firing-xi_perp}
    \(\norm{\xipp{p}} = \norm{P \Delta_{\perp}(\alpha,\beta,p) A Q}\) where
    \vspace{0mm} \begin{equation}
        \Delta_{\perp}(\alpha,\beta,p) := pD_{z}S(pAx + b) - p^{\star}D_{z}S(p^{\star}Ax^{\star} + b) \label{eq:firing-delta_perp}
        \vspace{2mm} \end{equation}
\end{lemma}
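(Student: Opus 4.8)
The plan is to mirror the computation used for the Hopfield case in \autoref{lem:hopfield-xi_perp}, adapting it to the chain-rule structure of the firing rate Jacobian. I would begin from the definition \eqref{def:xi_perp} of \(\xipp{p}\), writing
\[
\xipp{p} = W^{\top}\Bigl(D_{x}\Psi\bigl(\Gamma(\alpha,\beta),p\bigr)-D_{x}\Psi(x^{\star},p^{\star})\Bigr)\bar{V}.
\]
Substituting the firing rate Jacobian \eqref{eq:firing-D_x} at the two evaluation points makes the \(-C\) terms cancel, leaving
\[
D_{x}\Psi\bigl(\Gamma(\alpha,\beta),p\bigr)-D_{x}\Psi(x^{\star},p^{\star}) = pD_{z}S\bigl(pA\Gamma(\alpha,\beta)+b\bigr)A - p^{\star}D_{z}S(p^{\star}Ax^{\star}+b)A.
\]
Factoring \(A\) out on the right and recognizing the bracketed diagonal difference as \(\Delta_{\perp}(\alpha,\beta,p)\) from \eqref{eq:firing-delta_perp} (using \(x=\Gamma(\alpha,\beta)\)) yields the compact form \(\xipp{p} = W^{\top}\Delta_{\perp}(\alpha,\beta,p)A\bar{V}\).

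The key structural difference from the Hopfield case is the placement of \(A\): because the activation in \eqref{eq:firing} acts on the argument \(pAx+b\), differentiation via the chain rule puts \(A\) to the \emph{right} of the diagonal derivative matrix, whereas in \eqref{eq:hopfield} it appeared to the left. This reverses the roles of the factors flanking \(W^{\top}\) and \(\bar{V}\), but the norm-reduction lemmas still apply unchanged, since they act only on the outermost orthonormal factors. I would first apply \autoref{lem:PY-norm} to the leftmost factor, replacing \(W^{\top}\) by \(P=WW^{\top}\), and then \autoref{lem:YP-norm} to the rightmost factor, replacing \(\bar{V}\) by \(Q=\bar{V}\bar{V}^{\top}\), giving
\[
\norm{W^{\top}\Delta_{\perp}A\bar{V}} = \norm{P\Delta_{\perp}A\bar{V}} = \norm{P\Delta_{\perp}A\bar{V}\bar{V}^{\top}} = \norm{P\Delta_{\perp}AQ},
\]
which is the claimed identity.

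I do not expect a substantive obstacle here; the argument is a direct substitution followed by two applications of the norm lemmas, exactly parallel to \autoref{lem:hopfield-xi_perp}. The only point requiring care is the bookkeeping of the factor order \(\Delta_{\perp}A\) (rather than \(A\Delta_{\perp}\)) introduced by differentiating through the composed argument, and confirming that this reordering leaves the two-sided projection argument intact so that \(P\) lands on the left of \(\Delta_{\perp}\) and \(Q\) on the right of \(A\).
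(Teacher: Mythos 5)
Your proposal is correct and follows essentially the same route as the paper: substitute the firing-rate Jacobian \eqref{eq:firing-D_x} at both evaluation points so the \(-C\) terms cancel, factor \(A\) out on the right to identify \(W^{\top}\Delta_{\perp}(\alpha,\beta,p)A\bar{V}\), and then apply \autoref{lem:PY-norm} followed by \autoref{lem:YP-norm} to replace \(W^{\top}\) and \(\bar{V}\) by the projections \(P\) and \(Q\). Your remark on the placement of \(A\) (right of the diagonal factor, unlike the Hopfield case) and the reading of \(x\) as \(\Gamma(\alpha,\beta)\) in \eqref{eq:firing-delta_perp} matches what the paper does implicitly.
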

\begin{proof}
    Building from its definition \eqref{def:xi_perp}, we have
    \begin{align*}
        \xipp{p} & = W^{\top}\big(D_{x}\Psi(\Gamma(\alpha,\beta),p)-D_{x}\Phi(x^{\star},p^{\star})\big)\Vbar \\
                 & = W^{\top}\bigl(pD_{z}S\bigl(pA\Gamma(\alpha,\beta) + b\bigr) \notag                      \\
                 & \hspace{13mm} - p^{\star}D_{z}S(p^{\star}Ax^{\star} + b)\bigr)A\Vbar                      \\
                 & = W^{\top} \Delta_{\perp}(\alpha, \beta, p) A \Vbar.
    \end{align*}
    % \(
    %     \xipp{p} = W^{\top}\big(D_{x}\Psi(\Gamma(\alpha,\beta),p)-D_{x}\Phi(x^{\star},p^{\star})\big)\Vbar
    %              = W^{\top}\Bigl(pD_{z}S\bigl(pA\Gamma(\alpha,\beta) + b\bigr) - p^{\star}D_{z}S(p^{\star}Ax^{\star} + b)\Bigr)A\Vbar
    %              = W^{\top} \Delta_{\perp}(\alpha, \beta, p) A \Vbar.
    % \)
    Applying \Cref{lem:P-norm} to the above results in the desired expression for \(\xipp{p}\), since
    \(
    \norm{W^{\top} \Delta_{\perp}(\alpha, \beta, p) A \Vbar} = \norm{WW^{\top} \Delta_{\perp}(\alpha, \beta, p) A \Vbar\,\Vbar^{\top}} = \norm{P \Delta_{\perp}(\alpha, \beta, p) A Q}.
    \)
\end{proof}

\vspace{3mm}

With these lemmas established, we now state the bounds from \Cref{thm:LSR-bounds-simple} for bifurcations in the Firing Rate model.

\vspace{3mm}

\begin{theorem}\label{thm:firing-bounds}
    Consider the Firing Rate dynamics \eqref{eq:firing} under \Cref{asmp:firing}. For all \(\rpr,\rpp>0\) satisfying
    \begin{multline}
        \Lpr\rpr + \Lpp\rpp  <  \sigma_{\min}\left(J_F^{\star}\right) \rpp \\ - \norm{P D_{z}S(p^{\star} A x^{\star} + b) A x^{\star}} \rpr
    \end{multline}
    where the quantities \(\Lpr,\Lpp\) are defined as
    \begin{align*}
        \Lpr & = \underset{(\alpha,p) \in \Bpr}{\sup} \norm{P \begin{bmatrix} \Delta^{\parallel}_{x}(\alpha,p) AV & \Delta^{\parallel}_{p}(\alpha,p) \end{bmatrix}} \\
        \Lpp & = \underset{\substack{(\alpha,p) \in \Bpr                                                                                                            \\ \beta \in \Bpp}}{\sup} \norm{P \Delta_{\perp}(\alpha,\beta,p) A Q}
    \end{align*}
    with \(\Delta_x^{\parallel}\), \(\Delta_p^{\parallel}\), \(\Delta_{\perp}\) defined in \eqref{eq:firing-delta_para}, \eqref{eq:firing-delta_perp}, there exists a smooth implicit map \(\varphi : \Bpr \to \Bpp\) that satisfies \eqref{eq:LSRmap}.
\end{theorem}
\begin{proof}
    Substituting \eqref{def:M_perp}, \eqref{def:L_para}, \eqref{def:L_perp} from Lemmas \ref{lem:M_perp}, \ref{lem:firing-xi_para}, \ref{lem:firing-xi_perp} into \eqref{eq:LSR-bounds-simple} from \Cref{thm:LSR-bounds-simple} yields the theorem statement.
\end{proof}

%%%%%%%%%%%%%%%%%%%%%%%%%%%%%%%%%%%%%%%%%%%%%%%%%%%%%%%%%%%%%%%%%%%%%%%%%%%%%%%%

\section{Application to Consensus Bifurcations}
\label{sec:example}

In recent years, models of the form \eqref{eq:hopfield}, \eqref{eq:firing} have received attention in the literature on collective decision-making and opinion dynamics. In this context, components of the state vector \(x\) represent scalar opinions of individuals in a social network, which are updated in time according to a nonlinear saturating social rule. Of particular interest is the emergence of collective decision equilibria as bifurcations from the neutral \(x = 0\) state. Such indecision-breaking bifurcations have been established in opinion dynamics models with both Hopfield-like structure \cite{franci2015realization,gray2018multiagent,fontan2017multiequilibria} and firing rate-like structure \cite{bizyaeva2021patterns,bizyaeva2023nonlinear,leonard2024fast}. In this section, we apply the bounds established in \Cref{sec:results} to assess the validity region of the \LSR{}-based analysis of indecision-breaking supercritical pitchfork bifurcations in these nonlinear opinion dynamic models over regular balanced consensus graphs.

To specialize \eqref{eq:hopfield}, \eqref{eq:firing} to nonlinear opinion dynamics (NOD) over regular graphs, we set \(C =  dI_{n}\) with \(d > 0\), \(p = u\), let \(A\) be the adjacency matrix of a \(k-\)regular, undirected and connected simple graph with unit edge weights, \(S =  \tanh\), and consider the unbiased case \(b \equiv \mathbf{0}_{n}\). By the definition of a \(k-\)regular graph, the adjacency matrix \(A\) satisfies \(A \mathbf{1}_{n} = k \mathbf{1}_{n}\). For both models, there is an indecision-breaking bifurcation point \((x^{\star},u^{\star})=\left(\mathbf{0}_{n}, \frac{d}{k}\right)\) at which consensus solutions emerge, see \Cref{fig:consensus-manifold} for schematic. Note from \eqref{def:split} that \(x^{\star}=\mathbf{0}_{n}\implies(\alpha_0,\beta_0)=(0,\mathbf{0}_{n-1})\).

\begin{figure}[ht!]
    \centering
    \includegraphics[width=0.7\linewidth]{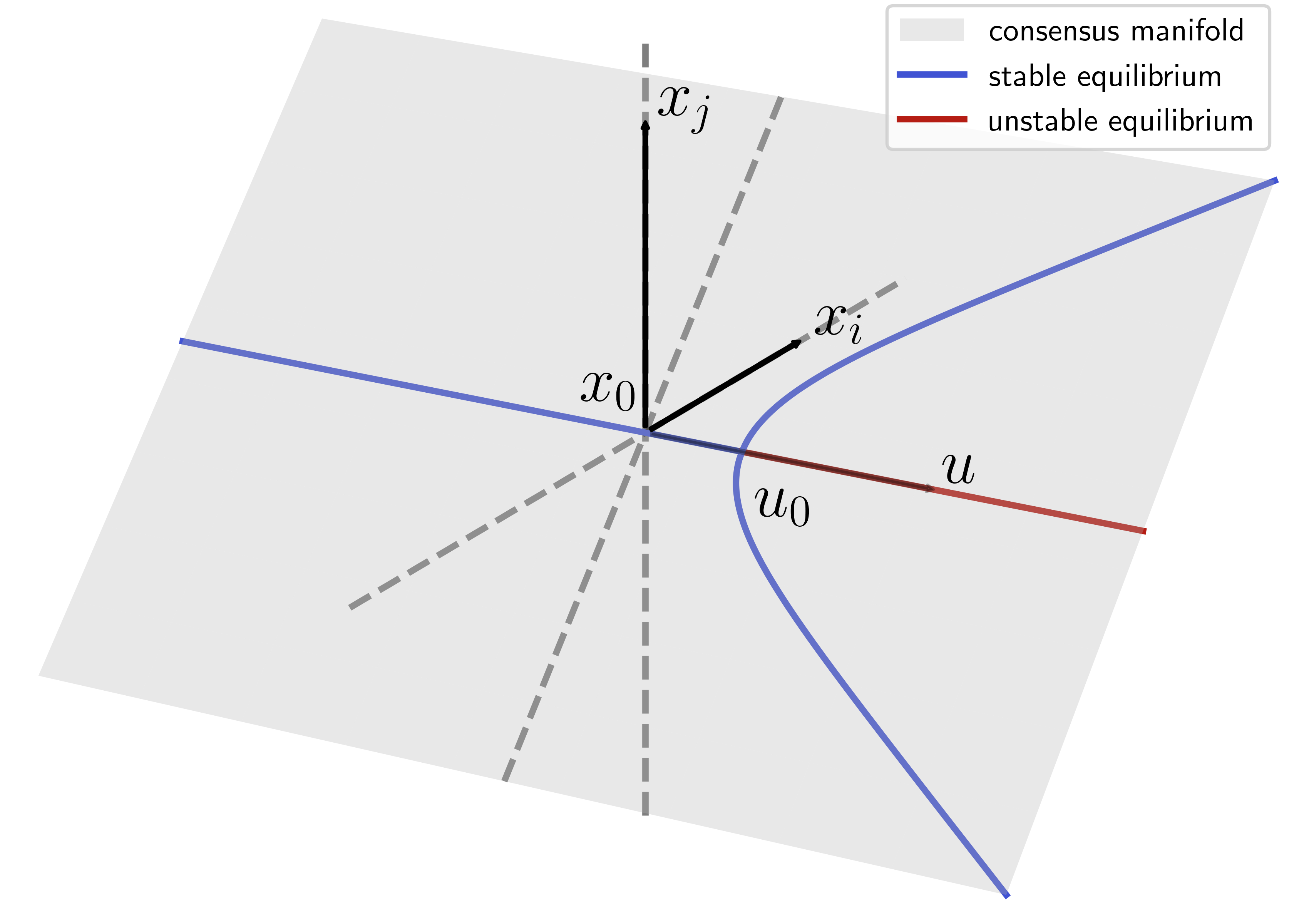}
    \caption{Consensus pitchfork bifurcation in opinion dynamics over a regular graph at \((x^{\star},u^{\star})=\left(\mathbf{0}, \tfrac{d}{k}\right)\); emerging branches of bistable equilibria lie on a manifold tangent to the consensus subspace \(\big\{(x\mathbf{1}_{n},u):x,u\in\R\}\) at the bifurcation point.}
    \label{fig:consensus-manifold}
\end{figure}

\subsection{Hopfield-structured NOD}

Consider \eqref{eq:hopfield} for the \(n-\)node \(k-\)regular consensus graph:
\begin{equation}
    \dot{x} = \Phi(x,u) = -dx + uA\tanh(x). \label{eq:NLOD-hopfield}
\end{equation}

Jacobian \(J^{\star} := D_{x}\Phi(x^{\star},u^{\star})\) of system \eqref{eq:NLOD-hopfield} simplifies to
\begin{equation}
    J^{\star} = -dI_{n}+u^{\star}A\,\mathrm{diag}\big[\sech^2(x^{\star})\big] = \tfrac{d}{k}(A-kI_{n}). \label{eq:jac_Hopfield_NOD}
\end{equation}

We first characterize the orthonormal basis matrix \(V\) of \(\kerJ\). By the Perron--Frobenius theorem \cite[Theorem~8.4.4]{horn2012matrix}, the adjacency matrix \(A\) of a \(k\)-regular, irreducible graph admits a simple largest eigenvalue \(k\), with associated eigenvector \(\mathbf{1}_{n}\). Consequently, \(\kerJ\) is one-dimensional and spanned by the normalized vector \(V = \tfrac{1}{\sqrt{n}}\mathbf{1}_{n}\). Also note that since \(J^{\star}\) is symmetric, we have \(W = \Vbar\) and \(V = \Wbar\), and so \(P = Q\) and \(VV^{\top} + WW^{\top} = I_{n}\).

We now evaluate the quantities defined in \Cref{thm:hopfield-bounds} in order to establish explicit bounds for the bifurcation.
We begin with \(\Mpr\) and \(\Lpr\), both of which evaluate to zero, as we justify below.

From its expression in \eqref{eq:M_para}, we find that
\begin{equation*}
    \Mpr = \tfrac{1}{\abs{u^{\star}}}\norm{P(Cx^{\star} - b)} = \tfrac{k}{d}\norm{(I - VV^{\top})(dI_{n}\mathbf{0}_{n})} = 0.
\end{equation*}
To evaluate \(\Lpr\), we first evaluate the terms \(\Delta_{\parallel}(\alpha,u)\) and \(\bar{S}_{\parallel}(\alpha)\) defined in \eqref{eq:hopfield-xi_para_terms} for \(\xipr{u}\) to obtain
\begin{align*}
    \Delta_{\parallel}(\alpha,u) & = (us_{\alpha} - d/k)I_n,   & \text{with}\;\; s_{\alpha} = \sech^2(\alpha/\sqrt{n}) \\
    \bar{S}_{\parallel}(\alpha)  & = t_{\alpha}\mathbf{1}_{n}, & \text{with}\;\; t_{\alpha} = \tanh(\alpha/\sqrt{n}).
\end{align*}
Substituting these expressions into \Cref{lem:hopfield-xi_para} yields
\begin{align*}
    \norm{\xipr{u}} & = \norm{P A \begin{bmatrix}\left(us_{\alpha} - \tfrac{d}{k}\right) V & t_{\alpha} \mathbf{1}_{n} \end{bmatrix}}  \\
                    & = \norm{Q \begin{bmatrix}\left(us_{\alpha} - \tfrac{d}{k}\right) k V & t_{\alpha} k \mathbf{1}_{n}\end{bmatrix}} \\
                    & = \norm{\mathbf{0}_{(n-1)\times2}} = 0
\end{align*}
since both terms lie in the span of \(V\), and are annihilated by \(Q\).
Thus \(\norm{\xipr{u}} = 0\), and from its definition in \eqref{def:L_para} we conclude that \(\Lpr = 0\).
Consequently, the inequality \eqref{eq:LSR-bounds-simple} reduces to the condition \(\Mpp\Lpp < 1\). We now evaluate \(\Mpp\) and \(\Lpp\) and state the bounds.

To evaluate \(\Mpp\), %first recall that \(J^{\star} = \frac{d}{k}(A - kI_n)\), from which we can write 
observe that \(\lambda_i(J^{\star}) = \tfrac{d}{k}(\lambda_i(A) - k)\). For a \(k\)-regular, irreducible simple graph with adjacency matrix \(A\), the Perron--Frobenius theorem guarantees that the largest eigenvalue of \(A\) is \(\lambda_1(A) = k\), which is simple, and the subdominant eigenvalue satisfies \(\lambda_2(A) < k\). Consequently,
\begin{equation*}
    \lambda_{1}(J^{\star}) = 0, \quad\text{and}\quad \lambda_{i}(J^{\star}) = \tfrac{d}{k}(\lambda_i(A) - k) < 0, \ i > 1.
\end{equation*}
and hence, the smallest non-zero singular value of \(J^{\star}\) is \(\lambda_{2}(J^{\star}) = \tfrac{d}{k}(\lambda_{2}(A) - k)\) Therefore,
\begin{equation*}
    \Mpp = \Bigl(\tfrac{d}{k}\abs{\lambda_{2}(A) - k}\Bigr)^{-1} = \frac{k}{d\big(k - \lambda_{2}(A)\big)}.
\end{equation*}
Note that the expression for \(\Mpp\) depends inversely on the spectral gap \(k - \lambda_2(A)\). Hence, larger spectral separation between the Perron root and the subdominant eigenvalue of \(A\) leads to smaller values of \(\Mpp\).

Now, to evaluate \(\Lpp\), we first compute the expression for \(\Delta_{\perp}(\alpha, \beta, u)\) defined in \eqref{eq:hopfield-xi_para_terms} for \Cref{lem:hopfield-xi_perp} as
\begin{equation*}
    \Delta_{\perp}(\alpha, \beta, u) = u \diag\bigl[\sech^2\bigl(\Gamma(\alpha,\beta)\bigr)\bigr] - \tfrac{d}{k}I_{n}
\end{equation*}
To evaluate \(\norm{\xipp{u}}\), we first bound it from above using the submultiplicative property of the spectral norm
\begin{align*}
    \norm{\xipp{u}} & = \norm{P A \Delta_{\perp}(\alpha,\beta,u) Q}               \\
                    & \leq \norm{Q A Q} \norm{Q \Delta_{\perp}(\alpha,\beta,u) Q}
\end{align*}
The spectral norm of the projection of \(A\) onto \(\kerJ^{\perp}\) simplifies to \(\max_{i \neq 1}\{\abs{\lambda_{i}(A)}\} = \max\{\abs{\lambda_2(A)}, \abs{\lambda_n(A)}\} =: \abs{\lambda'(A)}\) where \(\lambda_2(A)\) and \(\lambda_n(A)\) denote the subdominant and smallest eigenvalues of \(A\), respectively. For the projection of \(\Delta_{\perp}(\alpha,\beta,u)\) onto \(\kerJ^{\perp}\), we find that its spectral norm is maximized at \((\alpha,\beta,u) = (0, \mathbf{0}_{n-1}, u^{\star} \pm \rpr)\). Since \(\norm{\xipr{u}}\) obtains this upper bound at the same point, we conclude that \(\Lpp = \rpr\,|\lambda'(A)|.\)

Having computed all requisite quantities, we now state the bounds for the Hopfield regular balanced consensus network.
\begin{theorem}\label{thm:NLOD-bounds}
    The bounds of validity of the \LSR{} from \Cref{thm:LSR-bounds} around the pitchfork bifurcation at \((\mathbf{x}^{\star},u^{\star})=(\mathbf{0}, \tfrac{d}{k})\) for the \(n-\)node \(k-\)regular balanced consensus opinion dynamic network \eqref{eq:NLOD-hopfield} is
    \begin{equation*}
        0<\rpr<\frac{d(k-\lambda_{2}(A))}{k\abs{\lambda'(A)}} \quad\text{and}\quad \rpp>0
    \end{equation*}
    along and normal to the consensus manifold \((\kerJ\times\R)\) (\Cref{fig:consensus-manifold}) respectively.
\end{theorem}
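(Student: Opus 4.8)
The plan is to assemble the four ingredients $\Mpr$, $\Lpr$, $\Mpp$, and $\Lpp$—each already evaluated in the preceding discussion—into the specialized validity condition of \autoref{thm:hopfield-bounds} and then solve the resulting scalar inequality for $\Rpr$ and $\Rpp$. First I would recall the specialized bound,
\[
    \Lpr\Rpr + \Lpp\Rpp < \sigma_{\min}(J^{\star})\Rpp - \tfrac{1}{u^{\star}}\norm{P(Cx^{\star} - b)}\Rpr,
\]
and observe that the subtracted term on the right is exactly $\Mpr\Rpr$, with $\Mpr = 0$ for the unbiased consensus network. Together with the already-established vanishing $\Lpr = 0$, the left-hand side loses its $\Rpr$ contribution entirely, collapsing the inequality to
\[
    \Lpp\Rpp < \sigma_{\min}(J^{\star})\Rpp = \tfrac{\Rpp}{\Mpp}.
\]

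Since $\Rpp > 0$, I can divide through by $\Rpp$ to obtain the equivalent condition $\Mpp\Lpp < 1$. This is the crucial structural simplification: the radius $\Rpp$ normal to the consensus manifold cancels completely, so it is left unconstrained beyond positivity, which accounts for the ``$\Rpp > 0$'' half of the claim. Substituting the computed values $\Mpp = k/\bigl(d(k - \lambda_2(A))\bigr)$ and $\Lpp = \Rpr\,\abs{\lambda'(A)}$ yields
\[
    \frac{k\,\abs{\lambda'(A)}}{d\bigl(k - \lambda_2(A)\bigr)}\,\Rpr < 1,
\]
which rearranges to the stated upper bound $\Rpr < d\bigl(k - \lambda_2(A)\bigr)/\bigl(k\,\abs{\lambda'(A)}\bigr)$. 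The strict inequality carried through \autoref{thm:hopfield-bounds} produces precisely the open interval $0 < \Rpr < d(k-\lambda_2(A))/(k\abs{\lambda'(A)})$, matching the theorem statement.

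The final assembly is routine once the four quantities are in hand, so the substantive work—and the main obstacle—lies in the preceding evaluations rather than in this concluding step. In particular, the delicate claim I would guard most carefully is $\Lpp = \Rpr\,\abs{\lambda'(A)}$, which requires verifying that the supremum defining $\Lpp$ over the product ball $\Bpr \times \Bpp$ is attained on the boundary at $(\alpha,\beta,u) = (0, \mathbf{0}_{n-1}, u^{\star} \pm \Rpr)$. I would confirm this by noting that $\sech^2$ attains its maximum at the origin, so the projected factor $Q\,\Delta_{\perp}(\alpha,\beta,u)\,Q$ is largest in norm when $\Gamma(\alpha,\beta)$ is closest to $x^{\star} = \mathbf{0}$ and $u$ is pushed to the edge of $\Bpr$; the $\beta$-dependence then drops out and the supremum becomes linear in $\Rpr$. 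With that extremization justified, the final inequality and hence the stated bounds follow immediately.
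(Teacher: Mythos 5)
Your proposal is correct and follows essentially the same route as the paper: the paper's proof likewise substitutes \(\Mpr = 0\), \(\Lpr = 0\), \(\Mpp = k/\bigl(d(k-\lambda_2(A))\bigr)\), and \(\Lpp = \Rpr\abs{\lambda'(A)}\) into the condition \eqref{eq:LSR-bounds-cond}, reduces it to \(\Mpp\Lpp < 1\) (so that \(\Rpp\) is unconstrained beyond positivity), and solves for \(\Rpr\). Your added caution about verifying that the supremum defining \(\Lpp\) is governed by the point \((\alpha,\beta,u)=(0,\mathbf{0}_{n-1},u^{\star}\pm\Rpr)\) matches the justification the paper gives in the discussion immediately preceding the theorem.
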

\begin{proof}
    Substituting the values of \eqref{def:M_perp} and \(\Mpr = 0\), \(\Lpr = 0\), \(\Mpp = k/(d(k - \lambda_2(A))\), \(\Lpp = \rpr |\lambda'(A)|\) into the inequality \eqref{eq:LSR-bounds-simple} from \Cref{thm:LSR-bounds-simple} yields the desired bounds.
\end{proof}

\begin{figure}[ht!]
    \centering
    \includegraphics[width=\linewidth]{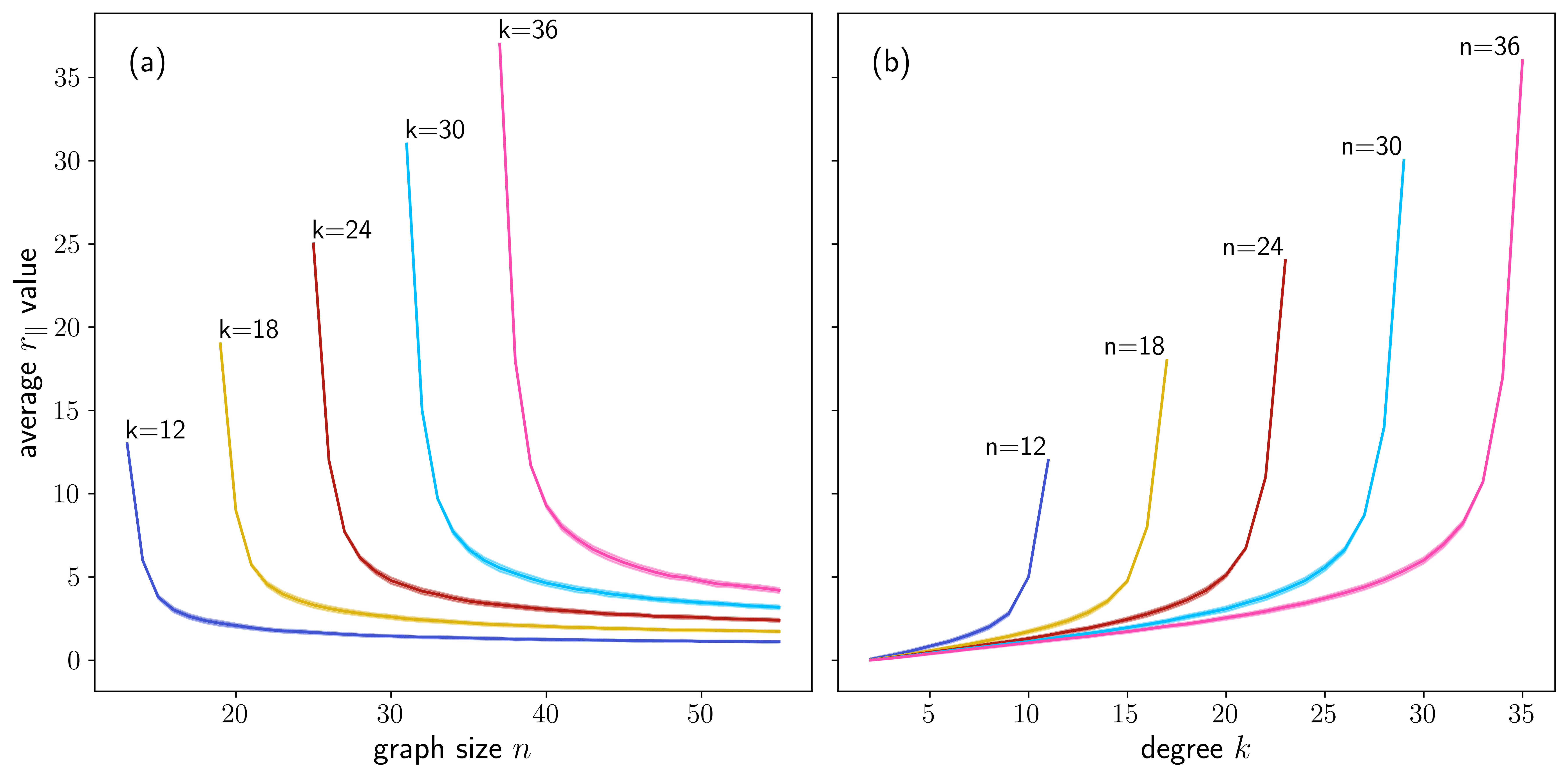}
    \caption{Variation in bounds of validity for \LSR{} for consensus bifurcation over balanced regular graphs.\\
        For each \((n,k)\) pair, 100 independent random graphs were generated using the NetworkX package in Python.
        The reported mean values and corresponding error bars represent the average and standard deviation of the computed bounds on \(\rpr\) across the sampled graphs.
        (a) illustrates the bounds with fixed degree \(k\in\{12,18,24,30,36\}\), varying graph size \(n\);
        (b) illustrates the bounds with fixed graph size \(n\in\{12,18,24,30,36\}\), varying degree \(k\).}
    \label{fig:nk_dependence}
\end{figure}

\vspace{3mm}

\begin{corollary}\label{cor:complete}
    For the case of the complete graph on \(n-\)nodes, i.e. \((n-1)-\)regular, the bounds from \Cref{thm:NLOD-bounds} simplify to \(0<\rpr<n\) and \(\rpp>0\)
\end{corollary}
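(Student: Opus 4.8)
The plan is to specialize the bound of \autoref{thm:NLOD-bounds} to the complete graph $K_n$, which reduces entirely to computing the adjacency spectrum of $K_n$ and substituting the resulting spectral data into the general interval. Since $\Mpr$ and $\Lpr$ vanish for \emph{every} regular graph, as established in the Hopfield--NOD computation preceding \autoref{thm:NLOD-bounds}, the constraint $\Rpp > 0$ is inherited unchanged, and the only quantities left to evaluate are the degree $k$, the subdominant eigenvalue $\lambda_2(A)$, and $\abs{\lambda'(A)} = \max_{i \neq 1}\abs{\lambda_i(A)}$.

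The single substantive step is determining the spectrum of $A$. For $K_n$ every vertex is adjacent to all $n-1$ others, so $A = \mathbf{1}_n\mathbf{1}_n^{\top} - I_n$ and the regularity degree is $k = n-1$, consistent with $A\mathbf{1}_n = (n-1)\mathbf{1}_n$. The rank-one matrix $\mathbf{1}_n\mathbf{1}_n^{\top}$ has eigenvalue $n$ along the Perron direction $\mathbf{1}_n$ and eigenvalue $0$ on its orthogonal complement; subtracting $I_n$ shifts every eigenvalue down by one, giving the simple Perron root $\lambda_1(A) = n-1 = k$ together with $\lambda_i(A) = -1$ for all $i > 1$. The key observation is that every non-Perron eigenvalue collapses to $-1$, so the subdominant and the smallest eigenvalues coincide: $\lambda_2(A) = -1$, whence the spectral gap is $k - \lambda_2(A) = n$ and $\abs{\lambda'(A)} = 1$.

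I would close by substituting these spectral values into the bound of \autoref{thm:NLOD-bounds} and simplifying to recover the claimed interval for $\Rpr$, with $\Rpp$ left unconstrained. I anticipate no genuine obstacle: the spectrum of $K_n$ is classical, and once it is recorded the corollary is immediate by direct substitution. The only point deserving a moment of care is that $\abs{\lambda'(A)}$ must be read as the largest modulus over \emph{all} non-Perron eigenvalues rather than simply $\abs{\lambda_2(A)}$; here the two agree because the off-Perron spectrum is degenerate, which is precisely what produces the clean form of the bound.
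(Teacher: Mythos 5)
Your proposal follows the paper's proof essentially step for step: both write \(A=\mathbf{1}_{n\times n}-I_{n}\), read off the spectrum of the rank-one part to get \(\lambda_{1}(A)=n-1=k\) and \(\lambda_{i}(A)=-1\) for \(i>1\), and then substitute into \autoref{thm:NLOD-bounds}; your explicit remark that \(\abs{\lambda'(A)}\) must be the largest modulus over the entire non-Perron spectrum (which here degenerates to \(1\)) is the only addition, and the paper leaves that point implicit.

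One caveat, which you inherit from the paper precisely because you defer the final substitution as ``immediate'': plugging \(k=n-1\), \(\lambda_{2}(A)=-1\), \(\abs{\lambda'(A)}=1\) into the bound \(d\bigl(k-\lambda_{2}(A)\bigr)/\bigl(k\abs{\lambda'(A)}\bigr)\) of \autoref{thm:NLOD-bounds} yields \(dn/(n-1)\), not \(n\). The claimed interval \(0<\Rpr<n\) therefore follows only if \(d=k=n-1\) or if the factor \(d/k\) is silently dropped; the paper's own proof makes the same unexplained leap (and additionally wavers between \(\Rpr<n\) in the statement and \(\Rpr\le n\) in its proof). So while your route is exactly the paper's, the single step you declared trivial is exactly where the discrepancy lives: carrying out the substitution honestly, you would have found that for general \(d>0\) the corollary does not simplify to the stated form.
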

\begin{proof}
    For the complete graph, the adjacency matrix can be expressed as \(A=\mathbf{1}_{n \times n}-I_{n}\) which has a null diagonal and unit non-diagonal entries, so each vertex has degree \(k=n-1\). The matrix \(\mathbf{1}_{n \times n}=:C\) has \(\lambda_{1}(C)=n\) with corresponding eigenvector \(\mathbf{1}_{n}\), while the remaining eigenvalues \(\lambda_{j\geq2}=0\) with the \(n-1\) eigenvectors orthogonal to \(\mathbf{1}_{n}\). Consequently, \(\lambda_{1}(A)=n-1\) and \(\lambda_{j\geq2}=-1\). Substituting these into \Cref{thm:NLOD-bounds} yields \(0<\rpr\leq n\) and \(\rpp>0\).
\end{proof}

\subsection{Firing-Rate-structured NOD}

We now repeat this analysis for \eqref{eq:firing} over the same \(n-\)node \(k-\)regular consensus graph:
\begin{equation}
    \dot{x} = \Psi(x,u) = -dx + \tanh(uAx). \label{eq:NLOD-firing}
\end{equation}

We will now show that all the relevant quantities admit identical expressions between the Hopfield and Firing-Rate models.
Firstly, Jacobian \(J^{\star} := D_{x}\Psi(x^{\star},u^{\star})\) of system \eqref{eq:NLOD-firing} simplifies to \(J^{\star} = -d I_{n} + \tfrac{d}{k} \diag\bigl[\sech^2(uA\mathbf{0}_{n})\bigl] A = \tfrac{d}{k}(A-kI_{n})\) which is precisely the same expression obtained in the Hopfield case \eqref{eq:jac_Hopfield_NOD}. Consequently, we have the same orthonormal basis \(V = \tfrac{1}{\sqrt{n}}\mathbf{1}_{n}\) for \(\kerJ\).

To show that \(\Mpr = 0\), observe that
\begin{equation*}
    \Mpr = \norm{P \diag\bigl[\sech^2(\tfrac{d}{k}A\mathbf{0}_{n})\bigr]A\mathbf{0}_{n}} = 0
\end{equation*}
and for \(\Lpr\), we first compute the terms defined in \eqref{eq:firing-delta_para} for \(\xipr{u}\) to obtain
\begin{equation*}
    \Delta_{x}^{\parallel}(\alpha,u) = (us_{uk\alpha} - d/k)I_n, \qquad \Delta_{u}^{\parallel}(\alpha,u) = k \alpha s_{uk\alpha} \mathbf{1}_{n},
\end{equation*}
where \(s_{uk\alpha} = \sech^2(uk\alpha/\sqrt{n})\). Substituting into \Cref{lem:firing-xi_para}
\begin{align*}
    \norm{\xipr{u}} & = \norm{P \begin{bmatrix} (u s_{uk\alpha} - d/k) A V & k \alpha s_{uk\alpha} \mathbf{1}_{n} \end{bmatrix}}   \\
                    & = \norm{Q \begin{bmatrix} (u k s_{uk\alpha} - d) V & k \alpha s_{uk\alpha} \mathbf{1}_{n} \end{bmatrix}} = 0
\end{align*}
since both terms lie in the span of \(V\), which is annihilated by \(Q\). From \eqref{def:L_para} we conclude that \(\Lpr = 0\).
Again, inequality \eqref{eq:LSR-bounds-simple} reduces to \(\Mpp \Lpp < 1\). It remains to compute \(\Mpp\) and \(\Lpp\).

Since \(J^{\star}\) remains unchanged from the Hopfield model, we identically have \(\Mpp = \frac{k}{d\big(k - \lambda_{2}(A)\big)}\).
To justify that \(\Lpp\) also remains unchanged, we compute the expression for \(\Delta_{\perp}\) defined in \eqref{eq:firing-delta_perp} for \Cref{lem:firing-xi_perp} to get
\begin{align*}
    \Delta_{\perp}(\alpha,\beta,u) & = u\diag\bigl(\sech^2\bigl(uA\Gamma(\alpha,\beta)\bigr)\bigr)                      \\
                                   & \qquad - \tfrac{d}{k}\diag\bigl(\sech^2(\tfrac{d}{k}A\mathbf{0}_{n})\bigr)         \\
                                   & = u\diag\bigl(\sech^2\bigl(uA\Gamma(\alpha,\beta)\bigr)\bigr) - \tfrac{d}{k}I_{n}.
\end{align*}
With the same reasoning as the Hopfield case, we can say that \(\norm{\xipr{u}}\) attains its supremum at \((\alpha,\beta,u) = (0, \mathbf{0}_{n-1}, u^{\star} \pm \rpr)\).
Consequently, identical to the Hopfield case, \(\Lpp = \rpr\abs{\lambda'(A)}\). Therefore, the bounds of validity for the reduced order bifurcation equations obtained via \LSR{} at the indecision-breaking bifurcation in the Firing Rate-form NOD model \eqref{eq:NLOD-firing} coincide with those of the Hopfield-form NOD model \eqref{eq:NLOD-hopfield}, as stated in \Cref{thm:NLOD-bounds}. We illustrate how these bounds depend on the network size \(n\) and node degree \(k\) in \Cref{fig:nk_dependence}.

%%%%%%%%%%%%%%%%%%%%%%%%%%%%%%%%%%%%%%%%%%%%%%%%%%%%%%%%%%%%%%%%%%%%%%%%%%%%%%%%

\section{Discussion}
\label{sec:conclusion}

In this article we established explicit, computable bounds for the validity of \LSR{}-based bifurcation analysis in representative classes of networked nonlinear systems, and demonstrated their utility by applying the bounds to an illustrative class of nonlinear consensus problems on regular graphs, for which the evaluation proves to be analytically computable.
Notably, the derived expressions depend directly on interpretable features, such as the spectral properties of the underlying graph, providing insight into how network structure influences these bounds.
In future work, we will aim to identify additional classes of systems with bifurcations analysed with the \LSR{} for which these bounds can be computed analytically, and to analyse how these bounds behave across other graph families, such as Erdős-Rényi and small-world networks.

%%%%%%%%%%%%%%%%%%%%%%%%%%%%%%%%%%%%%%%%%%%%%%%%%%%%%%%%%%%%%%%%%%%%%%%%%%%%%%%%

\section{Acknowledgement}

Ravi N Banavar's work was partially supported by a MATRICS grant from the Anusandhan Research Foundation (ANRF).

%%%%%%%%%%%%%%%%%%%%%%%%%%%%%%%%%%%%%%%%%%%%%%%%%%%%%%%%%%%%%%%%%%%%%%%%%%%%%%%%

\bibliographystyle{ieeetr}
\bibliography{ref}

\end{document}